\definecolor{gray}{RGB}{128,128,128}
\newtheorem{thm}{Theorem}
\newtheorem{lemma}[thm]{Lemma}
\newtheorem{proof}{Proof}
\newcommand{\hA}{\hat{A}}
\newcommand{\hB}{\hat{B}}
\newcommand{\hG}{\hat{G}}
\newcommand{\hP}{\hat{P}}
\newcommand{\hU}{\hat{U}}
\newcommand{\hW}{\hat{W}}
\newcommand{\hX}{\hat{X}}
\newcommand{\hY}{\hat{Y}}
\newcommand{\ha}{\hat{a}}
\newcommand{\hc}{\hat{c}}
\newcommand{\hz}{\hat{z}}
\newcommand{\hPi}{\hat{\Pi}}
\newcommand{\hPhi}{\hat{\Phi}}
\newcommand{\hrho}{\hat{\rho}}
\newcommand{\hsigma}{\hat{\sigma}}
\newcommand{\mG}{\mathcal{G}}
\newcommand{\mH}{\mathcal{H}}
\newcommand{\mI}{\mathcal{I}}
\newcommand{\mM}{\mathcal{M}}
\newcommand{\mP}{\mathcal{P}}
\newcommand{\mR}{\mathcal{R}}
\newcommand{\mS}{\mathcal{S}}
\newcommand{\mX}{\mathcal{X}}
\newcommand{\ident}{\hat{1}}
\newcommand{\Real}{\mathbf{R}}
\newcommand{\POVM}{\mM}
\newcommand{\QED}{\hspace*{0pt}\hfill $\blacksquare$}
\newcommand{\Tr}{{\rm Tr}}
\def\gauss_sym#1{{\lfloor #1 \rfloor}}
\renewcommand{\Real}{{\mR}}
\newcommand{\fmin}{f_{\min}}
\newcommand{\Prob}{\mP}
\newcommand{\mm}{\circ}
\renewcommand{\c}{\circ}
\newcommand{\g}{{(g)}}
\newcommand{\inv}[1]{\bar{#1}}
\begin{document}

\preprint{APS/123-QED}

\title{Generalized quantum state discrimination problems}

\affiliation{%
 Yokohama Research Laboratory, Hitachi, Ltd.,
 Yokohama, Kanagawa 244-0817, Japan
}%
\affiliation{
 Quantum Communication Research Center, Quantum ICT Research Institute, Tamagawa University,
 Machida, Tokyo 194-8610, Japan
}%
\affiliation{
 School of Information Science and Technology,
 Aichi Prefectural University,
 Nagakute, Aichi 480-1198, Japan
}%
\affiliation{%
 Quantum Information Science Research Center, Quantum ICT Research Institute,
 Tamagawa University, Machida, Tokyo 194-8610, Japan
}%

\author{Kenji Nakahira}
\affiliation{%
 Yokohama Research Laboratory, Hitachi, Ltd.,
 Yokohama, Kanagawa 244-0817, Japan
}%
\affiliation{%
 Quantum Information Science Research Center, Quantum ICT Research Institute,
 Tamagawa University, Machida, Tokyo 194-8610, Japan
}%

\author{Kentaro Kato}
\affiliation{
 Quantum Communication Research Center, Quantum ICT Research Institute, Tamagawa University,
 Machida, Tokyo 194-8610, Japan
}%

\author{Tsuyoshi \surname{Sasaki Usuda}}
\affiliation{
 School of Information Science and Technology,
 Aichi Prefectural University,
 Nagakute, Aichi 480-1198, Japan
}%
\affiliation{%
 Quantum Information Science Research Center, Quantum ICT Research Institute,
 Tamagawa University, Machida, Tokyo 194-8610, Japan
}%

\date{\today}

\begin{abstract}
 We address a broad class of optimization problems of finding quantum measurements,
 which includes the problems of finding an optimal measurement in the Bayes criterion
 and a measurement maximizing the average success probability
 with a fixed rate of inconclusive results.
 Our approach can deal with any problem in which each of the objective and constraint functions is formulated by
 the sum of the traces of the multiplication of a Hermitian operator and a detection operator.
 We first derive dual problems and necessary and sufficient conditions for an optimal measurement.
 We also consider the minimax version of these problems
 and provide necessary and sufficient conditions for a minimax solution.
 Finally, for optimization problem having a certain symmetry,
 there exists an optimal solution with the same symmetry.
 Examples are shown to illustrate how our results can be used.
\end{abstract}

\pacs{03.67.Hk}
\maketitle


\section{Introduction}

Discrimination between quantum states is a fundamental topic in quantum information theory.
Research in quantum state discrimination was pioneered by
Helstrom, Holevo, and Yuen {\it et al.} \cite{Hel-1976,Hol-1973,Yue-Ken-Lax-1975}
in the 1970s and has attracted intensive attention.
It is well known in quantum mechanics that nonorthogonal states cannot be
discriminated with certainty.
Thus, optimal measurement strategies have been proposed under various criteria.
Among them, one of the most widely investigated is the Bayes criterion,
or the criterion of minimum average error probability \cite{Hol-1973,Yue-Ken-Lax-1975,Hel-1976}.
In the Bayes criterion, necessary and sufficient conditions for obtaining an optimal measurement
have been formulated \cite{Hol-1973,Yue-Ken-Lax-1975,Hel-1976,Eld-Meg-Ver-2003},
and closed-form analytical expressions for optimal measurements have also been derived
in some cases (see e.g.,
\cite{Bel-1975,Ban-Kur-Mom-Hir-1997,Usu-Tak-Hat-Hir-1999,Eld-For-2001,Nak-Usu-2012-Bayes}).
This criterion is based on the assumption that prior probabilities of the states are known.
In contrast, if these prior probabilities are unknown,
then minimax criteria are often used \cite{Hir-Ike-1982,Osa-Ban-Hir-1996}.
Necessary and sufficient conditions for a measurement minimizing
the worst case of the average error probability
in the minimax strategy have been found \cite{Hir-Ike-1982}.
This result has also been extended to the average Bayes cost \cite{Kat-2012},

Other types of optimal measurements have been investigated.
In the case in which prior probabilities of the states are known,
an example concerns a measurement that achieves low average error probability
at the expense of allowing for
a certain fraction of inconclusive results \cite{Iva-1987,Die-1988,Per-1988}.
In particular, an unambiguous (or error-free) measurement that maximizes the success probability,
which is called an optimal unambiguous measurement, has been well studied \cite{Iva-1987,Die-1988,Per-1988}.
A measurement that maximizes the average success probability
with a fixed average failure (or inconclusive) probability,
which is called an optimal inconclusive measurement,
has also been studied \cite{Che-Bar-1998-inc,Eld-2003-inc,Fiu-Jez-2003}.
Moreover, a measurement that maximizes the average success probability
where a certain fixed average error probability is allowed,
which we call an optimal error margin measurement, has also been investigated
\cite{Tou-Ada-Ste-2007,Hay-Has-Hor-2008,Sug-Has-Hor-Hay-2009}.
On the other hand, in the case in which prior probabilities are unknown,
several types of measurements based on the minimax strategy have been proposed,
such as a measurement that minimizes the maximum probability of detection errors \cite{Dar-Sac-Kah-2005}
and a measurement with a certain fraction of inconclusive results \cite{Nak-Kat-Usu-2013-minimax}.
Properties of optimal measurements in the above criteria,
such as necessary and sufficient conditions for optimal solutions,
have been derived for each criterion.

In this paper, we investigate optimization problems of finding optimal quantum measurements
and their minimax versions
that are applicable to a wide range of quantum state discrimination problems.
Our approach can deal with any problem in which each of the objective and constraint functions is formulated by
the sum of the traces of the multiplication of a Hermitian operator and a detection operator,
which implies that any problems related to finding any of the optimal measurements
described above can be formulated as our problems.
Thus, we can say that our approach can provide a unified treatment in a large class of problems.
The results obtained in this paper would be valuable from the practical point of view;
for example, they not only provide a broader perspective than the results for a particular problem,
but also can apply to many problems that have not been reported previously,
some examples of which are presented in this paper.
To obtain knowledge about an optimal measurement in a new criterion
has the potential to create a new application of quantum state discrimination.

In Sec.~\ref{sec:gen_opt}, we provide a generalized optimization problem
in which each of the objective and constraint functions is formulated by
the sum of the traces of the multiplication of a Hermitian operator and a detection operator.
We derive its dual problem and necessary and sufficient conditions for an optimal measurement.
In Sec.~\ref{sec:minimax}, we discuss the minimax version of our generalized problem
and provide necessary and sufficient conditions for a minimax solution.
In Sec.~\ref{sec:sym}, we demonstrate that if a given problem has a certain symmetry,
then there exists an optimal solution with the same symmetry.
Finally, we present some examples to illustrate the applicability of our results in Sec.~\ref{sec:example}.

\section{Generalized optimal measurement} \label{sec:gen_opt}

\subsection{Formulation}

We consider a quantum measurement on a Hilbert space $\mH$.
Such a quantum measurement can be modeled by
a positive operator-valued measure (POVM)
$\Pi = \{ \hPi_m : m \in \mI_M \}$ on $\mH$,
where $M$ is the number of the detection operators
and $\mI_k = \{ 0, 1, \cdots, k-1 \}$.
An example of a quantum measurement is an optimal measurement
for distinguishing $R$ quantum states represented by density operators $\hrho_r ~(r \in \mI_R)$.
The density operator $\hrho_r$ satisfies $\hrho_r \ge 0$
and has unit trace, i.e., $\Tr ~\hrho_r = 1$,
where $\hA \ge 0$ denotes that $\hA$ are positive semidefinite
(similarly, $\hA \ge \hB$ denotes $\hA - \hB \ge 0$).
A minimum error measurement is such an optimal measurement,
which can be expressed by a POVM with $M = R$ detection operators.
A quantum measurement that may return an inconclusive answer
can be expressed by a POVM with $M = R + 1$ detection operators;
in this case the detection operator $\hPi_r$ $~(r \in \mI_R)$ corresponds to identification of
the state $\hrho_r$, while $\hPi_R$ corresponds to the inconclusive answer.

Let $\POVM$ be the entire set of POVMs on $\mH$ that consist of $M$ detection operators.
$\Pi \in \POVM$ satisfies
\begin{eqnarray}
 \hPi_m &\ge& 0, ~~~ \forall m \in \mI_M, \nonumber \\
 \sum_{m=0}^{M-1} \hPi_m &=& \ident, \label{eq:POVM}
\end{eqnarray}
where $\ident$ is the identity operator on $\mH$.
In addition, let $\mS$ and $\mS_+$ be the entire sets of Hermitian operators on $\mH$
and semidefinite positive operators on $\mH$, respectively.
Let $\Real$ and $\Real_+$ be the entire sets of real numbers and nonnegative real numbers, respectively,
and $\Real_+^N$ be the entire set of collections of $N$ nonnegative real numbers.

Here, we consider a generalized optimization problem.
The conditional probability that the measurement outcome is $m$ when
a quantum state $\hrho$ is given is represented by $\Tr(\hrho\hPi_m)$,
and thus there exist many optimization problems of finding optimal quantum measurements
such that each of the objective and constraint functions is expressed by
a linear combination of forms $\Tr(\hrho_r\hPi_m)$.
For this reason, we consider the following optimization problem:
\begin{eqnarray}
 \begin{array}{ll}
  {\rm maximize} & \displaystyle f(\Pi) = \sum_{m=0}^{M-1} \Tr(\hc_m \hPi_m) \\
  {\rm subject~to} & \Pi \in \POVM^\mm, \\
 \end{array} \label{eq:primal}
\end{eqnarray}
where $\hc_m \in \mS$ holds for any $m \in \mI_M$.
(Note that any linear combination of positive semidefinite operators is a Hermitian operator.)
$\POVM^\mm$ is expressed by
\begin{eqnarray}
 \hspace{-1em}
  \POVM^\mm &=& \left\{ \Pi \in \POVM : \sum_{m=0}^{M-1} \Tr(\ha_{j,m} \hPi_m) \le b_j, ~ \forall j \in \mI_J \right\},
  \label{eq:POVMmm}
\end{eqnarray}
where $\ha_{j,m} \in \mS$ and $b_j \in \Real$ hold for any $m \in \mI_M$ and $j \in \mI_J$.
$J$ is a nonnegative integer.
We should mention that an equality constraint (e.g., $\Tr(\ha_{j,0} \hPi_0) = b_j$) can be replaced
by two inequality constraints (e.g., $\Tr(\ha_{j,0} \hPi_0) \le b_j$ and $\Tr(-\ha_{j,0} \hPi_0) \le -b_j$).
We call an optimal solution to problem (\ref{eq:primal})
a generalized optimal measurement or simply an optimal measurement.
Problem (\ref{eq:primal}) is said to be a primal problem.
Since $f(\Pi)$ is linear in $\Pi$ and $\POVM^\mm$ is convex,
problem (\ref{eq:primal}) is a convex optimization problem.
Note that since the constraint of $\Pi \in \POVM$, i.e., Eq.~(\ref{eq:POVM}),
can be rewritten as $\Tr(\hrho\hPi_m) \ge 0$
and $\sum_{m=0}^{M-1} \Tr(\hrho\hPi_m) = 1$ for any density operator $\hrho$,
we can say that each of the objective and constraint functions is formulated by
the sum of the traces of the multiplication of a Hermitian operator and a detection operator.

\subsection{Examples} \label{subsec:example}

We give some examples of optimization problems of finding quantum measurements
that can be formulated as problem (\ref{eq:primal}).
Let us consider discrimination between $R$ quantum states $\{ \hrho_r : r \in \mI_R \}$
with prior probabilities $\{ \xi_r : r \in \mI_R \}$.

\subsubsection{Optimal measurement in the Bayes criterion} \label{subsubsec:bayes}

The optimization problem of finding an optimal measurement in the Bayes criterion
is formulated as \cite{Hol-1973,Yue-Ken-Lax-1975,Hel-1976}
\begin{eqnarray}
 \begin{array}{ll}
  {\rm minimize} & \displaystyle \sum_{m=0}^{R-1} \Tr(\hW_m \hPi_m) \\
  {\rm subject~to} & \displaystyle \Pi \in \POVM. \\
 \end{array} \label{eq:bayes_primal}
\end{eqnarray}
$\hW_m \in \mS_+$ $~(m \in \mI_R)$ can be expressed by
\begin{eqnarray}
 \hW_m &=& \sum_{r=0}^{R-1} \xi_r B_{m,r} \hrho_r,
\end{eqnarray}
where $B_{m,r} \in \Real_+$ holds for any $m,r \in \mI_R$.
This problem can be written as the form of problem (\ref{eq:primal}) with
\begin{eqnarray}
 M &=& R, \nonumber \\
 J &=& 0, \nonumber \\
 \hc_m &=& - \hW_m.
\end{eqnarray}

\subsubsection{Optimal error margin measurement} \label{subsubsec:error_margin}

An optimal error margin measurement is a measurement maximizing the average success
probability under the constraint that the average error probability
is not greater than a given value $\varepsilon$ with $0 \le \varepsilon \le 1$
\cite{Tou-Ada-Ste-2007,Hay-Has-Hor-2008,Sug-Has-Hor-Hay-2009}.
In particular, if $\varepsilon = 0$, then
an optimal error margin measurement is equivalent to an optimal unambiguous measurement.
The optimization problem of finding an optimal error margin measurement is formulated as
\begin{eqnarray}
 \begin{array}{ll}
  {\rm maximize} & \displaystyle \sum_{r=0}^{R-1} \xi_r \Tr(\hrho_r \hPi_r) \\
  {\rm subject~to} & \displaystyle \Pi \in \POVM, ~ \sum_{r=0}^{R-1} \xi_r \Tr[\hrho_r (\hPi_r + \hPi_R)] \ge 1 - \varepsilon, \\
 \end{array} \label{eq:err_primal}
\end{eqnarray}
where we consider that the statement that the average error probability is not greater than $\varepsilon$
is equivalent to the statement that the sum of the average success and failure probabilities
is not less than $1 - \varepsilon$.
This problem can be written as the form of problem (\ref{eq:primal}) with
\begin{eqnarray}
 M &=& R + 1, \nonumber \\
 J &=& 1, \nonumber \\
 \hc_m &=&
  \left\{
   \begin{array}{ll}
	\xi_m \hrho_m, & ~ m < R, \\
	0, & ~ m = R,
   \end{array} \right. \nonumber \\
 \ha_{0,m} &=&
  \left\{
   \begin{array}{ll}
	- \xi_m \hrho_m, & ~ m < R, \\
	- \hG, & ~ m = R,
   \end{array} \right. \nonumber \\
 b_0 &=& \varepsilon - 1,
\end{eqnarray}
where
\begin{eqnarray}
 \hG &=& \sum_{r=0}^{R-1} \xi_r \hrho_r. \label{eq:G}
\end{eqnarray}
Note that an optimal error margin measurement has strong relationship
with an optimal inconclusive measurement \cite{Nak-Usu-Kat-2012-GUInc,Her-2012}.
However, if one wants to obtain an optimal error margin measurement for a given $\varepsilon$,
then one needs to solve problem (\ref{eq:err_primal})
instead of the problem of finding an optimal inconclusive measurement.

\subsubsection{Optimal inconclusive measurement with a lower bound on success probabilities}
 \label{subsubsec:bounded_inc}

Another example is an extension of the problem of finding an optimal inconclusive measurement.
An optimal inconclusive measurement is a measurement maximizing the average success
probability under the constraint that the average failure probability equals a given value $p$
with $0 \le p \le 1$
\cite{Che-Bar-1998-inc,Eld-2003-inc,Fiu-Jez-2003}.
Here, we add the constraint that for each $r \in \mI_R$
the success probability of the state $\hrho_r$, i.e., $\Tr(\hrho_r \hPi_r)$,
is not less than a given value $q$ with $0 \le q \le 1$.
When $q = 0$, an optimal solution is an optimal inconclusive measurement.
This problem is formulated as
\begin{eqnarray}
 \begin{array}{ll}
  {\rm maximize} & \displaystyle \sum_{r=0}^{R-1} \xi_r \Tr(\hrho_r \hPi_r) \\
  {\rm subject~to} & \Pi \in \POVM, ~ \Tr(\hrho_r \hPi_r) \ge q, ~ \forall r \in \mI_R, \\
  & \Tr(\hG\hPi_R) = p, \\
 \end{array} \label{eq:bounded_inc_primal}
\end{eqnarray}
where $\hG$ is defined by Eq.~(\ref{eq:G}).
Since the optimal value of problem (\ref{eq:bounded_inc_primal}) is monotonically decreasing
with respect to $p$, we obtain the same solution if the last constraint
of problem (\ref{eq:bounded_inc_primal}) is replaced with $\Tr(\hG \hPi_R) \ge p$.
Thus, this problem is equivalent to problem (\ref{eq:primal}) with
\begin{eqnarray}
 M &=& J = R + 1, \nonumber \\
 \hc_m &=&
  \left\{
   \begin{array}{ll}
	\xi_m \hrho_m, & ~ m < R, \\
	0, & ~ m = R,
   \end{array} \right. \nonumber \\
 \ha_{j,m} &=&
  \left\{
   \begin{array}{ll}
	- \delta_{m,j} \hrho_m, & ~ j < R, \\
	- \delta_{m,R} \hG, & ~ j = R, \\
   \end{array} \right. \nonumber \\
 b_j &=&
  \left\{
   \begin{array}{ll}
	- q, & ~ j < R, \\
	- p, & ~ j = R, \\
   \end{array} \right. \label{eq:bounded_inc_abc}
\end{eqnarray}
where $\delta_{k,k'}$ is the Kronecker delta.
Note that if $q > q'$ holds, where $q'$ is the average success probability of
an optimal inconclusive measurement with the average failure probability of $p$,
then this problem is infeasible, i.e., $\POVM^\mm$ is empty.
We discuss this problem in detail in Subsec.~\ref{subsec:example_bounded_inc}.

\subsection{Dual problem} \label{subsec:dual}

In this subsection, we show the dual problem of problem (\ref{eq:primal}).
We also show that the optimal values of primal problem (\ref{eq:primal}) and
the dual problem are the same.

\begin{thm} \label{thm:dual}
 Let us consider problem (\ref{eq:primal}).
 We also consider the following optimization problem
 \begin{eqnarray}
  \begin{array}{ll}
   {\rm minimize} & \displaystyle s(\hX, \lambda) = \Tr~\hX + \sum_{j=0}^{J-1} \lambda_j b_j \\
   {\rm subject~to} & \hX \ge \hz_m(\lambda), ~~ \forall m \in \mI_M  \\
  \end{array} \label{eq:dual}
 \end{eqnarray}
 with variables $\hX \in \mS$ and $\lambda = \{ \lambda_j \in \Real_+ : j \in \mI_J \} \in \Real_+^J$,
 where
 \begin{eqnarray}
  \hz_m(\lambda) &=& \hc_m - \sum_{j=0}^{J-1} \lambda_j \ha_{j,m}. \label{eq:zm}
 \end{eqnarray}
 If $\POVM^\mm$ is not empty, then
 the optimal values of problems (\ref{eq:primal}) and (\ref{eq:dual}) are the same.
\end{thm}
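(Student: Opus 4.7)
The plan is to identify (\ref{eq:dual}) as the Lagrangian dual of (\ref{eq:primal}), verify weak duality by direct algebra, and obtain strong duality through a minimax exchange powered by compactness of $\POVM$.

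First, I would form the partial Lagrangian by attaching multipliers $\lambda_j \ge 0$ to the $J$ inequality constraints while keeping $\Pi \in \POVM$ as a domain restriction:
\begin{eqnarray*}
L(\Pi,\lambda) &=& \sum_{j=0}^{J-1}\lambda_j b_j + \sum_{m=0}^{M-1}\Tr[\hz_m(\lambda)\hPi_m].
\end{eqnarray*}
A short computation using $\sum_m \hPi_m = \ident$ yields the weak-duality identity
\begin{eqnarray*}
s(\hX,\lambda) - f(\Pi) &=& \sum_m \Tr[(\hX-\hz_m(\lambda))\hPi_m] + \sum_j \lambda_j \bigl[b_j - \sum_m \Tr(\ha_{j,m}\hPi_m)\bigr],
\end{eqnarray*}
valid for any primal-feasible $\Pi$ and any dual-feasible $(\hX,\lambda)$; each summand is manifestly nonnegative.

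For the reverse inequality I would observe that the primal value equals $\sup_{\Pi \in \POVM}\inf_{\lambda \in \Real_+^J} L(\Pi,\lambda)$, since the inner infimum is $f(\Pi)$ for $\Pi \in \POVM^\mm$ and $-\infty$ otherwise, so the feasibility hypothesis makes the outer supremum finite and equal to the primal optimum. I would then invoke Sion's minimax theorem to swap the order: $\POVM$ is convex and compact (each $\hPi_m$ satisfies $0 \le \hPi_m \le \ident$), $\Real_+^J$ is convex, and $L$ is jointly continuous, affine (hence concave) in $\Pi$, and affine (hence convex) in $\lambda$, so $\sup_\Pi\inf_\lambda L = \inf_\lambda\sup_\Pi L$. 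For each fixed $\lambda$, the inner supremum is an elementary quantum-detection SDP whose dual gives
\begin{eqnarray*}
\sup_{\Pi \in \POVM}\sum_m \Tr[\hz_m(\lambda)\hPi_m] &=& \min\{\Tr \hX : \hX \in \mS,\ \hX \ge \hz_m(\lambda)\ \forall m\},
\end{eqnarray*}
and strong duality here is standard because the right-hand side is strictly feasible (take $\hX$ a sufficiently large multiple of $\ident$). Substituting reproduces exactly the value of (\ref{eq:dual}).

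The main obstacle is the strong-duality step, because the hypothesis assumes only that $\POVM^\mm$ is nonempty and not a Slater-type strict-feasibility condition for problem (\ref{eq:primal}) as a whole, so a generic SDP strong-duality theorem cannot be applied directly. Compactness of $\POVM$ is the structural feature that rescues the argument: it is precisely what allows Sion's theorem to supply the minimax interchange from feasibility alone. A self-contained alternative would be a separating-hyperplane argument on the convex compact image $\{(f(\Pi), \sum_m \Tr(\ha_{0,m}\hPi_m),\ldots, \sum_m \Tr(\ha_{J-1,m}\hPi_m)) : \Pi \in \POVM\} \subset \Real^{J+1}$, separated from the half-space of strictly primal-improving, constraint-respecting points; the separating normal would then furnish the multipliers $\lambda_j$ and, in turn, the operator $\hX$.
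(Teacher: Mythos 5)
Your proof is correct, but it takes a genuinely different route from the paper's. The paper forms the full Lagrangian, attaching operator multipliers $\hsigma_m \ge 0$ to the positivity constraints $\hPi_m \ge 0$ and $\hX$ to the completeness constraint $\sum_m \hPi_m = \ident$, in addition to the $\lambda_j$ for the $J$ trace constraints; it then reads off problem (\ref{eq:dual}) as the Lagrange dual (after arguing that one may set $\hsigma_m = 0$) and closes the duality gap by citing the refined form of Slater's condition, on the grounds that every constraint is affine in $\Pi$, so mere nonemptiness of $\POVM^\mm$ suffices. You instead dualize only the $J$ extra constraints, keep $\Pi \in \POVM$ as a hard domain restriction, and obtain strong duality from Sion's minimax theorem using compactness of $\POVM$, reducing the inner maximization at fixed $\lambda$ to the classical Yuen--Kennedy--Lax duality, whose own gap closes without difficulty because $\hX = c\ident$ with $c$ large is strictly feasible. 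Your route buys something real: the constraints $\hPi_m \ge 0$ are conic, not scalar affine inequalities, so the ``affine constraints need only be feasible'' refinement of Slater's condition does not apply to them verbatim, and your compactness-based interchange sidesteps that delicacy entirely. The trade-off is that Slater's condition delivers attainment of the dual optimum for free, which the paper uses later (the proof of Theorem~\ref{thm:condition} opens by taking an optimal dual solution), whereas your argument as written yields only equality of optimal values; recovering attainment would require an extra step, e.g.\ showing that the infimum over $\lambda$ of $\sum_j \lambda_j b_j + \min\{\Tr~\hX : \hX \ge \hz_m(\lambda),\ \forall m \in \mI_M\}$ is achieved. Your weak-duality identity and the separating-hyperplane alternative you sketch are both sound.
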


Problem (\ref{eq:dual}) is called the dual problem of problem (\ref{eq:primal}).
Note that in general $\hX$ satisfying the constraints of problem (\ref{eq:dual}) is not in $\mS_+$;
however, it is obvious that if $m \in \mI_M$ exists such that $\hz_m \in \mS_+$, then $\hX \in \mS_+$ holds.

\begin{proof}
 Let us define the function $L$ as
 \begin{eqnarray}
  \lefteqn{ L(\Pi, \sigma, \hX, \lambda) } \nonumber \\
  &=& f(\Pi) + \sum_{m=0}^{M-1} \Tr(\hsigma_m \hPi_m)
  + \Tr \left[\hX \left( \ident - \sum_{m=0}^{M-1} \hPi_m \right) \right] \nonumber \\
  & & \mbox{} + \sum_{j=0}^{J-1} \lambda_j \left[ b_j - \sum_{m=0}^{M-1} \Tr(\ha_{j,m} \hPi_m) \right], \label{eq:L}
 \end{eqnarray}
 where $\sigma = \{ \hsigma_m \in \mS_+ : m \in \mI_M \}$, $\hX \in \mS$, and $\lambda \in \Real_+^J$.
 Note that $L$ is called the Lagrangian for problem (\ref{eq:primal}).
 Substituting Eqs.~(\ref{eq:primal}),(\ref{eq:dual}), and (\ref{eq:zm}) into Eq.~(\ref{eq:L})
 gives
 \begin{eqnarray}
  \lefteqn{ L(\Pi, \sigma, \hX, \lambda) } \nonumber \\
  &=& s(\hX, \lambda) 
   + \sum_{m=0}^{M-1} \Tr[(\hsigma_m + \hz_m(\lambda) - \hX) \hPi_m]. \label{eq:L2}
 \end{eqnarray}
 Let us consider the following optimization problem:
 \begin{eqnarray}
  \begin{array}{ll}
   {\rm minimize} & s_\sigma(\sigma, \hX, \lambda) \\
   {\rm subject~to} & \hsigma_m \in \mS_+, ~~ \forall m \in \mI_M, \\
   ~                & \hX \in \mS, \\
   ~                & \lambda \in \Real_+^J, \\
  \end{array} \label{eq:dual_s}
 \end{eqnarray}
 where
 \begin{eqnarray}
  s_\sigma(\sigma, \hX, \lambda) &=& \max_{\Pi \in \mS_+^M} L(\Pi, \sigma, \hX, \lambda) \label{eq:gs}
 \end{eqnarray}
 and $\mS_+^M = \{ \hPi_m \in \mS_+ : m \in \mI_M \}$.
 Let $\mX = \{ \hX : \hX \ge \hsigma_m + \hz_m(\lambda), ~ \forall m \in \mI_M \}$.
 The second term of the right-hand side of Eq.~(\ref{eq:L2}) is nonpositive if $\hX \in \mX$,
 and can be infinite if $\hX \not\in \mX$.
 Therefore, from Eq.~(\ref{eq:gs}), $s_\sigma(\sigma, \hX, \lambda)$ can be expressed as
 \begin{eqnarray}
  s_\sigma(\sigma, \hX, \lambda) &=&
   \left\{
	\begin{array}{ll}
	 \displaystyle s(\hX, \lambda), & ~ \hX \in \mX, \\
	 \infty, & ~ {\rm otherwise}. \\
	\end{array} \right. \label{eq:gs2}
 \end{eqnarray}
 From Eq.~(\ref{eq:gs2}), it follows that there exists an optimal solution to problem (\ref{eq:dual_s})
 such that $\hsigma_m = 0$ holds for any $m \in \mI_M$.
 Indeed, if $(\sigma, \hX, \lambda)$ is an optimal solution to problem (\ref{eq:dual_s})
 (in this case, $\hX \in \mX$ holds from Eq.~(\ref{eq:gs2})),
 then $(\{ \hsigma'_m = 0 : m \in \mI_M \}, \hX, \lambda)$ is also an optimal solution.
 Hence, problem (\ref{eq:dual_s}) can be rewritten by problem (\ref{eq:dual}).

 Slater's condition is known to a sufficient condition under which, if the primal problem is convex,
 the optimal values of the primal and dual problems are the same \cite{Sla-1950}.
 Since each constraint of primal problem (\ref{eq:primal}),
 including the constraint of $\Pi \in \POVM$, is expressed as
 a form of $u_j(\Pi) \le 0$, where $u_j$ is an affine function of $\Pi$,
 from Ref.~\cite{Boy-2009},
 (the refined form of) Slater's condition is that the primal problem is feasible, i.e., $\POVM^\mm$ is not empty.
 Thus, since Slater's condition holds,
 the optimal values of problems (\ref{eq:primal}) and (\ref{eq:dual}) are the same.
 \QED
\end{proof}

It is worth noting that some attempts have been made to obtain the maximum average success probability
without using the fact that POVMs describe quantum measurements
\cite{Kim-Miy-Ima-2009,Bae-Hwa-Han-2011,Bae-2013}.
In Ref.~\cite{Kim-Miy-Ima-2009}, the dual problem to the problem of
finding a minimum error measurement was derived from general probabilistic theories.
In Refs.~\cite{Bae-Hwa-Han-2011,Bae-2013}, 
the dual problem was derived from ``ensemble steering,'' which determines
what states one party can prepare on the other party's system by sharing a bipartite state.
In the same way, we can derive dual problem (\ref{eq:dual})
without using POVMs (see Appendix~\ref{append:nosig}).
However, it might not be easy to prove that the optimal value of problem (\ref{eq:dual}) is attained
by using these approaches.

\subsection{Conditions for an optimal measurement} \label{subsec:opt_condition}

Necessary and sufficient conditions for an optimal measurement in several problems
(such as a minimum error measurement and an optimal inconclusive measurement)
have been derived
\cite{Hol-1973,Hel-1976,Yue-Ken-Lax-1975,Eld-Meg-Ver-2003,Fiu-Jez-2003,Eld-2003-inc,Eld-2003-unamb,Nak-Usu-2013-group}.
The following theorem extends these results to our more general setting.

\begin{thm} \label{thm:condition}
 Suppose that a POVM $\Pi$ is in $\POVM^\mm$.
 The following statements are all equivalent.
 \begin{enumerate}[(1)]
  \setlength{\parskip}{0cm}
  \setlength{\itemsep}{0cm}
  \item $\Pi$ is an optimal measurement of problem (\ref{eq:primal}).
  \item $\hX \in \mS$ and $\lambda \in \Real_+^J$ exist such that
		\begin{eqnarray}
		 \hX - \hz_m(\lambda) &\ge& 0, ~ \forall m \in \mI_M, \label{eq:cond_Xz} \\
		 \lbrack \hX - \hz_m(\lambda) \rbrack \hPi_m &=& 0, ~ \forall m \in \mI_M, \label{eq:cond_XzPi} \\
		 \lambda_j \left[ b_j - \sum_{m=0}^{M-1} \Tr(\ha_{j,m} \hPi_m) \right] &=& 0,
		  ~ \forall j \in \mI_J. \label{eq:cond_b}
		\end{eqnarray}
  \item $\lambda \in \Real_+^J$ exists such that
		\begin{eqnarray}
		 \sum_{n=0}^{M-1} \hz_n(\lambda) \hPi_n - \hz_m(\lambda) &\ge& 0,
		  ~ \forall m \in \mI_M, \label{eq:cond_Xlambda} \\
		 \lambda_j \left[ b_j - \sum_{m=0}^{M-1} \Tr(\ha_{j,m} \hPi_m) \right] &=& 0,
		  ~ \forall j \in \mI_J. \label{eq:cond_b2}
		\end{eqnarray}
 \end{enumerate}
\end{thm}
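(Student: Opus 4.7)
The plan is to prove the cycle (1) $\Rightarrow$ (2) $\Rightarrow$ (3) $\Rightarrow$ (1), which is a standard KKT-style argument. The main tool is Theorem~\ref{thm:dual} (strong duality), and beyond it the only nontrivial ingredient is the elementary fact that if $\hA, \hB \in \mS_+$ and $\Tr(\hA\hB) = 0$, then $\hA\hB = 0$ (seen by writing $\Tr(\hA\hB) = \|\hA^{1/2}\hB^{1/2}\|_{\rm HS}^2$). I also flag a tacit issue: Theorem~\ref{thm:dual} states only equality of the optimal values, so for (1) $\Rightarrow$ (2) I must verify that the dual infimum is attained; in finite dimension this follows from compactness of the sublevel sets of $s(\hX,\lambda)$ after restricting $\lambda$ to a bounded region.

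For (1) $\Rightarrow$ (2), I invoke Theorem~\ref{thm:dual} to pick a dual optimum $(\hX,\lambda)$ with $f(\Pi) = s(\hX,\lambda)$. Dual feasibility is exactly (\ref{eq:cond_Xz}). Using $\sum_m \hPi_m = \ident$ to rewrite $\Tr\hX = \sum_m \Tr(\hX\hPi_m)$ together with (\ref{eq:zm}), the zero duality gap rearranges to $0 = \sum_m \Tr[(\hX - \hz_m(\lambda))\hPi_m] + \sum_j \lambda_j[b_j - \sum_m \Tr(\ha_{j,m}\hPi_m)]$. Every summand is nonnegative---the first by (\ref{eq:cond_Xz}) and $\hPi_m \ge 0$, the second by $\lambda_j \ge 0$ and $\Pi \in \POVM^\mm$---so each vanishes separately. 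The vanishing of the second sum is (\ref{eq:cond_b}), and the vanishing of each trace combined with the trace lemma yields (\ref{eq:cond_XzPi}).

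For (2) $\Rightarrow$ (3), I would read (\ref{eq:cond_XzPi}) as $\hX\hPi_m = \hz_m(\lambda)\hPi_m$, sum over $m$, and invoke $\sum_m \hPi_m = \ident$ to get the identity $\hX = \sum_n \hz_n(\lambda)\hPi_n$; substituting into (\ref{eq:cond_Xz}) produces (\ref{eq:cond_Xlambda}), while (\ref{eq:cond_b2}) is merely (\ref{eq:cond_b}) restated. Conversely, for (3) $\Rightarrow$ (1), I define $\hX := \sum_n \hz_n(\lambda)\hPi_n$; then (\ref{eq:cond_Xlambda}) establishes dual feasibility $\hX - \hz_m(\lambda) \ge 0$ and, together with Hermiticity of each $\hz_m(\lambda)$, confirms $\hX \in \mS$. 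A direct expansion of $s(\hX,\lambda)$ using (\ref{eq:zm}) and the complementary slackness (\ref{eq:cond_b2}) collapses to $f(\Pi)$, and weak duality (implicit in Theorem~\ref{thm:dual}) then forces $\Pi$ to be primal optimal.
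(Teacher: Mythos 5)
Your proposal is correct and follows essentially the same route as the paper: (1)$\Rightarrow$(2) via strong duality plus the ``sum of nonnegative terms equals zero'' argument (which the paper phrases through the Lagrangian $L$ rather than by expanding the duality gap directly), (2)$\Rightarrow$(3) by summing $\hX\hPi_m = \hz_m(\lambda)\hPi_m$ over $m$ and using completeness, and (3)$\Rightarrow$(1) by the weak-duality computation that the paper writes out explicitly as $f(\Pi) - f(\Pi') \ge \sum_{m}\Tr[(\hX - \hz_m(\lambda))\hPi'_m] \ge 0$. Your remark that attainment of the dual optimum must be checked in the step (1)$\Rightarrow$(2) is a fair point that the paper passes over silently; it does hold here, since Slater's condition (already invoked in the proof of Theorem~\ref{thm:dual}) guarantees that the dual optimal value is attained.
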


Note that, from Eq.~(\ref{eq:cond_XzPi}),
the support of the detection operator $\hPi_m$ of the optimal measurement
is included in the kernel of $\hX - \hz_m(\lambda)$ for any $m \in \mI_M$.

\begin{proof}
 It is sufficient to show (1) $\Rightarrow$ (2), (2) $\Rightarrow$ (3), and (3) $\Rightarrow$ (1).

 First, we show (1) $\Rightarrow$ (2).
 Suppose that $(\hX, \lambda)$ is an optimal solution to dual problem (\ref{eq:dual}).
 Let $\hsigma_m = 0$ for any $m \in \mI_M$.
 It is obvious from Eq.~(\ref{eq:dual}) that Eq.~(\ref{eq:cond_Xz}) holds.
 From Theorem~\ref{thm:dual}, $f(\Pi) = s(\hX, \lambda)$ holds.
 Moreover, from $\Pi \in \POVM^\mm$,
 the second and third terms of the right-hand side of Eq.~(\ref{eq:L}) are zero,
 and the fourth term is nonnegative,
 which yields
 \begin{eqnarray}
  L(\Pi, \sigma, \hX, \lambda) \ge f(\Pi) = s(\hX, \lambda). \label{eq:L_ge_g}
 \end{eqnarray}
 In contrast, from Eq.~(\ref{eq:cond_Xz}) and the fact that 
 the trace of the multiplication of two positive semidefinite operators is nonnegative,
 $\Tr[(\hX - \hz_m(\lambda)) \hPi_m] \ge 0$ holds for any $m \in \mI_M$,
 which yields $L(\Pi, \sigma, \hX, \lambda) \le s(\hX, \lambda)$ from Eq.~(\ref{eq:L2}).
 Thus, from Eq.~(\ref{eq:L_ge_g}), we obtain $L(\Pi, \sigma, \hX, \lambda) = s(\hX, \lambda)$, i.e.,
 \begin{eqnarray}
  \Tr[(\hX - \hz_m(\lambda)) \hPi_m] &=& 0, ~ \forall m \in \mI_M. \label{eq:sum_X_z_Pi_0}
 \end{eqnarray}
 Therefore, using the fact that $\hA\hB = 0$ holds for any $\hA, \hB \in \mS_+$ satisfying $\Tr(\hA\hB) = 0$
 yields Eq.~(\ref{eq:cond_XzPi}).
 From $L(\Pi, \sigma, \hX, \lambda) = f(\Pi)$,
 the fourth term of the right-hand side of Eq.~(\ref{eq:L}) must be zero.
 Therefore, Eq.~(\ref{eq:cond_b}) holds.

 Next, we show (2) $\Rightarrow$ (3).
 From Eq.~(\ref{eq:cond_XzPi}), $\hX \hPi_m = \hz_m(\lambda) \hPi_m$ holds.
 Summing this equation over $m = 0, \cdots, M-1$
 yields $\hX = \sum_{m=0}^{M-1} \hz_m(\lambda) \hPi_m$,
 which gives Eq.~(\ref{eq:cond_Xlambda}).
 Equation~(\ref{eq:cond_b2}) obviously holds from Eq.~(\ref{eq:cond_b}).

 Finally, we show (3) $\Rightarrow$ (1).
 Let $\hX = \sum_{m=0}^{M-1} \hz_m(\lambda) \hPi_m$.
 We have that for any POVM $\Pi' = \{ \hPi'_m : m \in \mI_M \} \in \POVM^\mm$,
 \begin{eqnarray}
  \lefteqn{ f(\Pi) - f(\Pi') } \nonumber \\
  &\ge& f(\Pi) + \sum_{j=0}^{J-1} \lambda_j \left[ b_j - \sum_{m=0}^{M-1} \Tr(\ha_{j,m} \hPi_m) \right] \nonumber \\
  & & \mbox{} - f(\Pi') - \sum_{j=0}^{J-1} \lambda_j \left[ b_j - \sum_{m=0}^{M-1} \Tr(\ha_{j,m} \hPi'_m) \right] \nonumber \\
  &=& \Tr~\hX - \sum_{m=0}^{M-1} \Tr[\hz_m(\lambda) \hPi_m'] \nonumber \\
  &=& \sum_{m=0}^{M-1} \Tr[(\hX - \hz_m(\lambda)) \hPi'_m] \ge 0,
 \end{eqnarray}
 where the first inequality follows from Eq.~(\ref{eq:cond_b2})
 and $\sum_{m=0}^{M-1} \Tr(\ha_{j,m} \hPi'_m) \le b_j$.
 The last inequality follows from Eq.~(\ref{eq:cond_Xlambda}), i.e., $\hX \ge \hz_m(\lambda)$.
 Since $f(\Pi) \ge f(\Pi')$ holds for any POVM $\Pi' \in \POVM^\mm$,
 $\Pi$ is an optimal measurement of problem (\ref{eq:primal}).
 \QED
\end{proof}

\section{Generalized minimax solution} \label{sec:minimax}

\subsection{Formulation}

In this section, we consider the quantum minimax strategy,
which provides a different type of problem from those discussed in the previous section.
The quantum minimax strategy has been investigated
\cite{Hir-Ike-1982,Osa-Ban-Hir-1996,Dar-Sac-Kah-2005,Kat-2012,Nak-Kat-Usu-2013-minimax}
under the assumption that the collection of prior probabilities is not given.
We investigate the minimax strategy for a generalized quantum state discrimination problem.

Let $K$ be a positive integer.
Also, let $\Prob$ be the entire set of collections of $K$ nonnegative real numbers,
$\mu = \{ \mu_k \ge 0 : k \in \mI_K \}$, satisfying $\sum_{k=0}^{K-1} \mu_k = 1$,
which implies that $\mu \in \Prob$ can be interpreted as a probability distribution.
We consider the following function $F(\mu, \Pi)$:
\begin{eqnarray}
 F(\mu, \Pi) &=& \sum_{k=0}^{K-1} \mu_k f_k(\Pi), \nonumber \\
 f_k(\Pi) &=& \sum_{m=0}^{M-1} \Tr(\hc_{k,m} \hPi_m) + d_k, \label{eq:fmm}
\end{eqnarray}
where $\hc_{k,m} \in \mS$ and $d_k \in \Real$ hold for any $m \in \mI_M$ and $k \in \mI_K$.
We want to find a POVM $\Pi \in \POVM^\mm$ that maximizes
the worst-case value of $F(\mu, \Pi)$ over $\mu \in \Prob$, i.e., $\min_{\mu \in \Prob} F(\mu, \Pi)$,
where $\POVM^\mm$ is defined by Eq.~(\ref{eq:POVMmm}).
In the case of $K = 1$, this problem is equivalent to
problem (\ref{eq:primal}) with $\hc_m = \hc_{0,m}$ and $d_0 = 0$.
Therefore, this problem can be regarded as an extension of problem (\ref{eq:primal}).

We can see that if $\POVM^\mm$ is not empty, then the so-called minimax theorem holds, that is,
there exists $(\mu^\star, \Pi^\star)$ satisfying the following equations:
\begin{eqnarray}
 \max_{\Pi \in \POVM^\mm} \min_{\mu \in \Prob} F(\mu, \Pi) &=& F(\mu^\star, \Pi^\star) \nonumber \\
 &=& \min_{\mu \in \Prob} \max_{\Pi \in \POVM^\mm} F(\mu, \Pi). \label{eq:minimax}
\end{eqnarray}
Indeed, $\POVM^\mm$ and $\Prob$ are closed convex sets,
and $F(\mu, \Pi)$ is a continuous convex function of $\mu$ for fixed $\Pi$
and a continuous concave function of $\Pi$ for fixed $\mu$,
which are sufficient conditions for the minimax theorem to hold \cite{Eke-Tem-1999}.
We call $(\mu^\star, \Pi^\star)$, $\mu^\star$, and $\Pi^\star$ respectively a minimax solution,
minimax probabilities, and a minimax measurement.
$(\mu^\star, \Pi^\star)$ is a minimax solution if and only if
$(\mu^\star, \Pi^\star)$ is a saddle point of $F(\mu, \Pi)$,
i.e., the following inequalities hold for any $\mu \in \Prob$ and $\Pi \in \POVM^\mm$ \cite{Eke-Tem-1999}:
\begin{eqnarray}
 F(\mu^\star, \Pi) \le F(\mu^\star, \Pi^\star) \le F(\mu, \Pi^\star). \label{eq:minimax_saddle}
\end{eqnarray}

Let
\begin{eqnarray}
 F^\star(\mu) = \max_{\Pi \in \POVM^\mm} F(\mu, \Pi) \label{eq:Fstar}
\end{eqnarray}
with $\mu \in \Prob$.
It follows from Eq.~(\ref{eq:minimax_saddle}) that $F^\star(\mu^\star) = F(\mu^\star, \Pi^\star)$ holds.
From Eq.~(\ref{eq:fmm}), $F(\mu, \Pi)$ can be expressed by
\begin{eqnarray} \hspace{-1.5em}
 F(\mu, \Pi) &=& \sum_{k=0}^{K-1} \mu_k \left[ \sum_{m=0}^{M-1} \Tr(\hc_{k,m} \hPi_m) + d_k \right] \nonumber \\
 &=& \sum_{m=0}^{M-1} \Tr\left[ \left( \sum_{k=0}^{K-1} \mu_k \hc_{k,m} \right) \hPi_m \right]
  + \sum_{k=0}^{K-1} \mu_k d_k. \label{eq:F_mu_Pi_Fstar}
\end{eqnarray}
Thus, $F^\star(\mu)$ for a given $\mu \in \Prob$ can be obtained
by finding $\Pi \in \POVM^\mm$ that maximizes the first term of the second line of Eq.~(\ref{eq:F_mu_Pi_Fstar}),
which is formulated as problem (\ref{eq:primal}) with $c_m = \sum_{k=0}^{K-1} \mu_k \hc_{k,m}$.

\subsection{Examples} \label{subsec:minimax_example}

We give some examples of minimax problems that can be formulated as Eq.~(\ref{eq:fmm}).
Let us consider discrimination between $R$ quantum states $\{ \hrho_r : r \in \mI_R \}$.

\subsubsection{Minimax solution in the Bayes strategy}

The minimax strategy in which the average Bayes cost is used as the objective function
has been investigated in Ref.~\cite{Kat-2012}.
We regard $\mu \in \Prob$ with $K = R$ as prior probabilities of the states $\{ \hrho_r : r \in \mI_R \}$.
The aim of this problem is to find a POVM $\Pi$ that minimizes the worst-case average Bayes cost
$B(\mu, \Pi)$ over $\mu \in \Prob$.
$B(\mu, \Pi)$ is expressed by
\begin{eqnarray}
 B(\mu, \Pi) &=& \sum_{m=0}^{R-1} \Tr[\hW_m(\mu) \hPi_m], \nonumber \\
 \hW_m(\mu) &=& \sum_{k=0}^{R-1} \mu_k B_{m,k} \hrho_k, \label{eq:minimax_bayes}
\end{eqnarray}
where $B_{m,k} \in \Real_+$ holds for any $m,k \in \mI_R$.
This problem can be expressed by a form of Eq.~(\ref{eq:fmm}) with $F(\mu, \Pi) = - B(\mu, \Pi)$.
In this case, we have
\begin{eqnarray}
 f_k(\Pi) &=& - \sum_{m=0}^{R-1} \Tr[(B_{m,k}\hrho_k) \hPi_m], ~~ \forall k \in \mI_R, \nonumber \\
 \POVM^\mm &=& \POVM,
 \label{eq:bayes_minimax_primal}
\end{eqnarray}
i.e.,
\begin{eqnarray}
 M &=& K = R, \nonumber \\
 J &=& 0, \nonumber \\
 \hc_{k,m} &=& - B_{m,k} \hrho_k, \nonumber \\
 d_k &=& 0.
\end{eqnarray}

\subsubsection{Inconclusive minimax solution}

The application to the minimax strategy to state discrimination
that allows a nonzero failure probability
has been investigated in Ref.~\cite{Nak-Kat-Usu-2013-minimax}.
The aim of this problem is to find a POVM $\Pi$, which we call an inconclusive minimax measurement,
that maximizes the worst-case value of the sum of the average success and failure probabilities
under the constraint that $\Tr(\hrho_j \hPi_R)$ is not greater than a given value $p$
with $0 \le p \le 1$ for any $j \in \mI_R$.
In particular, if $p = 0$, then an inconclusive minimax measurement is
a standard minimax measurement without inconclusive results \cite{Hir-Ike-1982}.
Let $K = R$ and $\mu \in \Prob$ be prior probabilities of the states $\{ \hrho_r : r \in \mI_R \}$;
then, this problem can be expressed by a form of Eq.~(\ref{eq:fmm}) with
\begin{eqnarray}
 f_k(\Pi) &=& \Tr[\hrho_k (\hPi_k + \hPi_R)], ~ \forall k \in \mI_R, \nonumber \\
 \POVM^\mm &=& \{ \Pi \in \POVM : \Tr(\hrho_j \hPi_R) \le p, ~ \forall j \in \mI_R \}.
 \label{eq:inc_minimax_primal}
\end{eqnarray}
That is, we have
\begin{eqnarray}
 M &=& R + 1, \nonumber \\
 K &=& J = R, \nonumber \\
 \hc_{k,m} &=&
  \left\{
   \begin{array}{ll}
	\hrho_k, & ~ m = k ~{\rm or}~ m = R, \\
	0, & ~ {\rm otherwise}, \\
   \end{array} \right. \nonumber \\
 d_k &=& 0, \nonumber \\
 \ha_{j,m} &=& \delta_{m,R} \hrho_j, \nonumber \\
 b_j &=& p.
\end{eqnarray}

\subsubsection{Minimax solution for plural state sets} \label{subsubsec:Pc_minimax}

We consider a quantum measurement that maximizes the worst-case average success probabilities
for plural quantum state sets $\{ \Psi_k : k \in \mI_K \}$ with $K \ge 2$ as another example,
where, for each $k \in \mI_K$,
$\Psi_k$ is a set of $R$ quantum states, $\Psi_k = \{ \hrho_{k,r} : r \in \mI_R \}$,
with prior probabilities $\{ \xi_{k,r} : r \in \mI_R \}$.
This problem can be expressed by a form of Eq.~(\ref{eq:fmm}) with
\begin{eqnarray}
 f_k(\Pi) &=& \sum_{m=0}^{R-1} \Tr(\hrho'_{k,m} \hPi_m), ~~ \forall k \in \mI_K, \nonumber \\
 \POVM^\mm &=& \POVM, \label{eq:Pc_minimax_primal}
\end{eqnarray}
where $\hrho'_{k,r} = \xi_{k,r} \hrho_{k,r}$.
That is, we have
\begin{eqnarray}
 M &=& R, \nonumber \\
 J &=& 0, \nonumber \\
 \hc_{k,m} &=& \hrho'_{k,m}, \nonumber \\
 d_k &=& 0. \label{eq:Pc_minimax_KJ}
\end{eqnarray}
We discuss this problem in detail in Subsec.~\ref{subsec:example_plural}.

\subsection{Properties of a minimax solution}

We show necessary and sufficient conditions for a minimax solution
in Theorem~\ref{thm:minimax}
and an optimization problem of obtaining a minimax measurement
in Theorem~\ref{thm:minimax_convex}.

\begin{thm} \label{thm:minimax}
 Suppose that $\mu^\star \in \Prob$ and $\Pi^\star \in \POVM^\mm$ hold.
 The following statements are all equivalent.
 \begin{enumerate}[(1)]
  \setlength{\parskip}{0cm}
  \setlength{\itemsep}{0cm}
  \item $(\mu^\star, \Pi^\star)$ is a minimax solution to Eq.~(\ref{eq:fmm}).
  \item We have that for any $k \in \mI_K$,
		\begin{eqnarray}
		 f_k(\Pi^\star) &\ge& F^\star(\mu^\star). \label{eq:thm_minimax2}
		\end{eqnarray}
  \item $F^\star(\mu^\star) = F(\mu^\star, \Pi^\star)$ holds,
		and we have that for any $k,k' \in \mI_K$ such that $\mu^\star_{k'} > 0$,
		\begin{eqnarray}
		 f_k(\Pi^\star) &\ge& f_{k'}(\Pi^\star). \label{eq:thm_minimax3}
		\end{eqnarray}
 \end{enumerate}
\end{thm}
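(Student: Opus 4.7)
The plan is to prove the cyclic chain of implications $(1) \Rightarrow (2) \Rightarrow (3) \Rightarrow (1)$, exploiting the saddle-point characterization of a minimax solution expressed in Eq.~(\ref{eq:minimax_saddle}) and the bilinear (in $\mu$ and $\Pi$) structure of $F(\mu,\Pi)$ given by Eq.~(\ref{eq:F_mu_Pi_Fstar}).

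For $(1) \Rightarrow (2)$, I would start from the right-hand inequality of Eq.~(\ref{eq:minimax_saddle}), namely $F(\mu^\star,\Pi^\star) \le F(\mu,\Pi^\star)$ for all $\mu \in \Prob$. Specializing $\mu$ to the point mass on index $k$, i.e., $\mu_j = \delta_{j,k}$, gives $F(\mu,\Pi^\star) = f_k(\Pi^\star)$, while $F(\mu^\star,\Pi^\star) = F^\star(\mu^\star)$ follows from the left-hand saddle inequality together with the definition~(\ref{eq:Fstar}). This immediately yields Eq.~(\ref{eq:thm_minimax2}).

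For $(2) \Rightarrow (3)$, I would first average Eq.~(\ref{eq:thm_minimax2}) with weights $\mu^\star_k$ to get $F(\mu^\star,\Pi^\star) = \sum_k \mu^\star_k f_k(\Pi^\star) \ge F^\star(\mu^\star)$; combined with the reverse inequality $F(\mu^\star,\Pi^\star) \le F^\star(\mu^\star)$ (definition of $F^\star$), this forces equality $F^\star(\mu^\star) = F(\mu^\star,\Pi^\star)$. The same equality, rewritten as $\sum_k \mu^\star_k [f_k(\Pi^\star) - F^\star(\mu^\star)] = 0$ with every summand nonnegative by Eq.~(\ref{eq:thm_minimax2}), forces $f_{k'}(\Pi^\star) = F^\star(\mu^\star)$ for every $k'$ with $\mu^\star_{k'} > 0$. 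Hence for any such $k'$ and any $k$, $f_k(\Pi^\star) \ge F^\star(\mu^\star) = f_{k'}(\Pi^\star)$, which is Eq.~(\ref{eq:thm_minimax3}).

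For $(3) \Rightarrow (1)$, I would verify both halves of the saddle-point inequality~(\ref{eq:minimax_saddle}). The equality $F^\star(\mu^\star) = F(\mu^\star,\Pi^\star)$ directly says that $\Pi^\star$ maximizes $F(\mu^\star,\cdot)$ over $\POVM^\mm$, yielding $F(\mu^\star,\Pi) \le F(\mu^\star,\Pi^\star)$ for every $\Pi \in \POVM^\mm$. For the other half, let $c$ denote the common value $f_{k'}(\Pi^\star)$ for $k'$ in the support of $\mu^\star$ (the condition in (3) implies this common value exists, and that $f_k(\Pi^\star) \ge c$ for every $k$). Then $F(\mu^\star,\Pi^\star) = c$, while for an arbitrary $\mu \in \Prob$, $F(\mu,\Pi^\star) = \sum_k \mu_k f_k(\Pi^\star) \ge c \sum_k \mu_k = c$, giving the required inequality and completing the saddle-point argument.

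The only slightly delicate step is the tightening in $(2) \Rightarrow (3)$ that extracts $f_{k'}(\Pi^\star) = F^\star(\mu^\star)$ on the support of $\mu^\star$; the rest of the argument is a matter of careful bookkeeping of the saddle inequalities, with no real technical obstacle, since the existence of a saddle point and the identification $F^\star(\mu^\star) = F(\mu^\star,\Pi^\star)$ have already been established in Sec.~\ref{sec:minimax} via the minimax theorem.
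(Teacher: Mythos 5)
Your proposal is correct and follows essentially the same route as the paper's proof: point-mass specialization of the saddle inequality for $(1)\Rightarrow(2)$, the averaging/complementary-slackness argument for $(2)\Rightarrow(3)$, and direct verification of both saddle inequalities for the return to $(1)$. The only difference is organizational --- you close a single cycle $(1)\Rightarrow(2)\Rightarrow(3)\Rightarrow(1)$, whereas the paper proves the two equivalences $(1)\Leftrightarrow(2)$ and $(2)\Leftrightarrow(3)$ separately, with your $(3)\Rightarrow(1)$ step simply concatenating the paper's $(3)\Rightarrow(2)$ and $(2)\Rightarrow(1)$ arguments.
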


\begin{proof}
 It suffices to show (1) $\Leftrightarrow$ (2) and (2) $\Leftrightarrow$ (3).

 First, we show (1) $\Rightarrow$ (2).
 Let $\mu^{(k)} = \{ \mu_{k'} = \delta_{k,k'} : k' \in \mI_K \}$.
 From Eq.~(\ref{eq:minimax_saddle}) and $F^\star(\mu^\star) = F(\mu^\star, \Pi^\star)$,
 we have that for any $k \in \mI_K$,
 \begin{eqnarray}
  f_k(\Pi^\star) &=& F(\mu^{(k)}, \Pi^\star) \ge F(\mu^\star, \Pi^\star) = F^\star(\mu^\star). \label{eq:Fstar_mustar}
 \end{eqnarray}
 Thus, Eq.~(\ref{eq:thm_minimax2}) holds.

 Next, we show (2) $\Rightarrow$ (1).
 From Eqs.~(\ref{eq:Fstar}) and (\ref{eq:thm_minimax2}),
 We obtain, for any $\mu \in \Prob$ and $\Pi \in \POVM^\mm$,
 \begin{eqnarray}
  F(\mu^\star, \Pi) &\le& F^\star(\mu^\star) \le \sum_{k=0}^{K-1} \mu_k f_k(\Pi^\star) = F(\mu, \Pi^\star).
   \label{eq:thm_minimax_FFF}
 \end{eqnarray}
 Substituting $\mu = \mu^\star$ and $\Pi = \Pi^\star$ into this equation
 gives $F^\star(\mu^\star) = F(\mu^\star, \Pi^\star)$.
 Thus, from Eq.~(\ref{eq:thm_minimax_FFF}), Eq.~(\ref{eq:minimax_saddle}) holds,
 which means that $(\mu^\star, \Pi^\star)$ is a minimax solution to Eq.~(\ref{eq:fmm}).

 Then, we show (2) $\Rightarrow$ (3).
 From Eq.~(\ref{eq:thm_minimax2}) and the definition of $F^\star(\mu)$,
 $F(\mu^\star, \Pi^\star) = F^\star(\mu^\star)$ must hold.
 Thus, we have
 \begin{eqnarray}
  f_k(\Pi^\star) &=& F^\star(\mu^\star), ~~ \forall k \in \mI_K ~{\rm s.t.}~\mu^\star_k > 0, \nonumber \\
  f_k(\Pi^\star) &\ge& F^\star(\mu^\star), ~~ \forall k \in \mI_K ~{\rm s.t.}~\mu^\star_k = 0,
   \label{eq:thm_minimax4}
 \end{eqnarray}
 from which we can easily see that Eq.~(\ref{eq:thm_minimax3}) holds.

 Finally, we show (3) $\Rightarrow$ (2).
 From Eq.~(\ref{eq:thm_minimax3}), $f_k(\Pi^\star) = f_{k'}(\Pi^\star)$ holds
 for any $k, k' \in \mI_K$ satisfying $\mu^\star_k > 0$ and $\mu^\star_{k'} > 0$.
 Thus, according to the definition of $F^\star(\mu)$,
 $F^\star(\mu^\star) = f_{k'}(\Pi^\star)$ holds for any $k' \in \mI_K$ satisfying $\mu^\star_{k'} > 0$.
 Substituting this into Eq.~(\ref{eq:thm_minimax3}) gives Eq.~(\ref{eq:thm_minimax2}).
 \QED
\end{proof}

\begin{thm} \label{thm:minimax_convex}
 Let us consider the following optimization problem
 \begin{eqnarray}
  \begin{array}{ll}
   {\rm maximize} & \displaystyle \fmin(\Pi) = \min_{k \in \mI_K} f_k(\Pi) \\
   {\rm subject~to} & \Pi \in \POVM^\mm \\
  \end{array} \label{eq:minimax_convex}
 \end{eqnarray}
 with a POVM $\Pi$.
 A POVM $\Pi^+$ is an optimal solution to problem (\ref{eq:minimax_convex})
 if and only if $\Pi^+$ is a minimax measurement of Eq.~(\ref{eq:fmm}).
\end{thm}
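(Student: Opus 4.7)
The plan is to exploit the fact that problem (\ref{eq:minimax_convex}) is literally the ``max-min'' side of the minimax identity (\ref{eq:minimax}), together with the characterization of minimax solutions already proved in Theorem~\ref{thm:minimax}. The whole argument rests on one elementary observation: since $F(\mu,\Pi)=\sum_{k} \mu_k f_k(\Pi)$ is linear in $\mu$ and $\Prob$ is the probability simplex, the minimum over $\mu\in\Prob$ of $F(\mu,\Pi)$ is attained at a vertex $\mu^{(k)}$ and therefore
\begin{eqnarray}
\min_{\mu\in\Prob} F(\mu,\Pi) \;=\; \min_{k\in\mI_K} f_k(\Pi) \;=\; \fmin(\Pi).
\end{eqnarray}
Hence problem (\ref{eq:minimax_convex}) is exactly $\max_{\Pi\in\POVM^\mm}\min_{\mu\in\Prob}F(\mu,\Pi)$, and by Eq.~(\ref{eq:minimax}) its optimal value equals $F(\mu^\star,\Pi^\star)=F^\star(\mu^\star)$ for any (known to exist) minimax solution $(\mu^\star,\Pi^\star)$.

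For the ``only if'' direction, suppose $\Pi^+$ is optimal for (\ref{eq:minimax_convex}). Then $\fmin(\Pi^+)=F^\star(\mu^\star)$, so for every $k\in\mI_K$,
\begin{eqnarray}
f_k(\Pi^+) \;\ge\; \fmin(\Pi^+) \;=\; F^\star(\mu^\star).
\end{eqnarray}
This is precisely condition (2) of Theorem~\ref{thm:minimax} applied to the pair $(\mu^\star,\Pi^+)$; therefore $(\mu^\star,\Pi^+)$ is a minimax solution and $\Pi^+$ is a minimax measurement.

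For the ``if'' direction, suppose $\Pi^+$ is a minimax measurement, so there exists $\mu^\star\in\Prob$ such that $(\mu^\star,\Pi^+)$ satisfies (\ref{eq:minimax_saddle}). By Theorem~\ref{thm:minimax}(2), $f_k(\Pi^+)\ge F^\star(\mu^\star)$ for every $k$, hence $\fmin(\Pi^+)\ge F^\star(\mu^\star)$. Combined with the universal upper bound from the key observation,
\begin{eqnarray}
\fmin(\Pi) \;=\; \min_{\mu\in\Prob} F(\mu,\Pi) \;\le\; F(\mu^\star,\Pi) \;\le\; F^\star(\mu^\star), \quad \forall \Pi \in \POVM^\mm,
\end{eqnarray}
we conclude $\fmin(\Pi^+)=F^\star(\mu^\star)\ge \fmin(\Pi)$ for every feasible $\Pi$, so $\Pi^+$ solves (\ref{eq:minimax_convex}).

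I do not anticipate a real obstacle here: the minimax theorem (Eq.~(\ref{eq:minimax})) and the equivalence in Theorem~\ref{thm:minimax} have already been established, and the only substantive content is the vertex-of-the-simplex identification $\min_{\mu}F(\mu,\Pi)=\fmin(\Pi)$. The one thing to be careful about is to reuse the \emph{same} $\mu^\star$ on both sides of the argument, namely the one coming from the minimax solution whose existence Eq.~(\ref{eq:minimax}) guarantees, so that invoking Theorem~\ref{thm:minimax}(2) is legitimate.
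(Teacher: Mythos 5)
Your proof is correct and follows essentially the same route as the paper: both directions rest on the existence of a minimax solution $(\mu^\star,\Pi^\star)$ from Eq.~(\ref{eq:minimax}), the pointwise bound $\fmin(\Pi)\le F(\mu^\star,\Pi)\le F^\star(\mu^\star)$, and an appeal to Statement~(2) of Theorem~\ref{thm:minimax} to convert optimality in problem~(\ref{eq:minimax_convex}) back into the minimax property. The only cosmetic difference is that you make the identity $\min_{\mu\in\Prob}F(\mu,\Pi)=\fmin(\Pi)$ explicit up front, whereas the paper uses the same inequalities implicitly in a single chain.
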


\begin{proof}
 Suppose that $\Pi^+$ is an optimal solution to problem (\ref{eq:minimax_convex}),
 and that $(\mu^\star, \Pi^\star)$ is a minimax solution to Eq.~(\ref{eq:fmm}).
 Equations~(\ref{eq:Fstar}) and (\ref{eq:thm_minimax2}) give
 \begin{eqnarray}
  \fmin(\Pi^\star) &\ge& F^\star(\mu^\star) = \max_{\Pi \in \POVM^\mm} F(\mu^\star, \Pi)
   \ge \max_{\Pi \in \POVM^\mm} \fmin(\Pi), \nonumber \\
  \label{eq:minimax_convex2}
 \end{eqnarray}
 which indicates that $\Pi^\star$ is an optimal solution to problem (\ref{eq:minimax_convex}).
 Since $\Pi^+$ is also an optimal solution to problem (\ref{eq:minimax_convex}),
 $\fmin(\Pi^+) = \fmin(\Pi^\star) \ge F^\star(\mu^\star)$ holds from Eq.~(\ref{eq:minimax_convex2}),
 and thus Statement~(2) of Theorem~\ref{thm:minimax} holds.
 Therefore, $\Pi^+$ is a minimax measurement of Eq.~(\ref{eq:fmm}).
 \QED
\end{proof}

\section{Group covariant optimization problem} \label{sec:sym}

In this section, we show that if an optimization problem of obtaining an optimal measurement
or a minimax solution has a certain symmetry, the optimal solution also has the same symmetry.
A quantum state set that is invariant under the action
of a group $\mG$ in which each element corresponds to a unitary or anti-unitary operator
is called a group covariant (or $\mG$-covariant) state set.
Similarly, we call an optimal measurement and a minimax solution that are invariant under the same action
a group covariant (or $\mG$-covariant) optimal measurement and a minimax solution, respectively.
Optimal measurements for group covariant state sets have been well investigated,
and it has been derived that a $\mG$-covariant optimal measurement exists for
a $\mG$-covariant state set under several optimality criteria
\cite{Bel-1975,Ban-Kur-Mom-Hir-1997,Usu-Tak-Hat-Hir-1999,Eld-For-2001,Eld-2003-inc,Eld-Meg-Ver-2004,Eld-Sto-Has-2004,Nak-Usu-2013-group,Nak-Kat-Usu-2013-minimax}.
These results not only help us to obtain analytical optimal solutions
(e.g., \cite{And-Bar-Gil-Hun-2002,Kat-Hir-2003,Qiu-2008}),
but also are useful for developing computationally efficient algorithms for obtaining optimal solutions
\cite{Ass-Car-Pie-2010,Nak-Kat-Usu-2015-numerical}.
In this section, we will generalize these results to our generalized optimization problems.

\subsection{Group action}

First, let us describe a group action.
A group action of $\mG$ on a set $T$ is a set of mappings from $T$ to $T$,
$\{ \pi_g(x)~(x \in T) : g \in \mG \}$ (we also denote $\pi_g(x)$ as $g \c x$), such that
\begin{enumerate}[(1)]
 \setlength{\parskip}{0cm}
 \setlength{\itemsep}{0cm}
 \item For any $g, h \in \mG$ and $x \in T$, $(gh) \c x = g \c (h \c x)$ holds.
 \item For any $x \in T$, $e \c x = x$ holds, where $e$ is the identity element of $\mG$.
\end{enumerate}
The action of $\mG$ on $T$ is called faithful if, for any distinct $g,h \in \mG$,
there exists $x \in T$ such that $g \c x \neq h \c x$.
Here, we assume that the number of elements in $\mG$, which is denoted as $|\mG|$,
is greater than one.

Let us consider an action of $\mG$ on the set $\mI_N$ with $N \ge 1$, that is,
$\{ g \c n \in \mI_N ~(n \in \mI_N) : g \in \mG \}$.
This action is not faithful in general.
We also consider the action of $\mG$ on $\mS$, expressed by
\begin{eqnarray}
 g \c \hA &=& \hU_g \hA \hU_g^\dagger \label{eq:group_action_S}
\end{eqnarray}
with $g \in \mG$ and $\hA \in \mS$,
where $\hU_g$ is a unitary or anti-unitary operator and $\hU_g^\dagger$ is conjugate transpose of $\hU_g$.
(Note that if $\hU_g$ is an anti-unitary operator, then
$\hU_g^\dagger$ is an anti-unitary operator such that $\hU_g^\dagger \hU_g = \hU_g \hU_g^\dagger = \ident$.)
$\hU_e = \hat{1}$ and $\hU_{\inv{g}} = \hU_g^\dagger$ obviously hold,
where $\inv{g}$ is the inverse element of $g$.
We assume that the action of $\mG$ on $\mS$ is faithful,
which is equivalent to $\hU_g \neq \hU_h$ for any distinct $g, h \in \mG$.
From Eq.~(\ref{eq:group_action_S}),
we can easily verify that for any $g \in \mG$, $c \in \Real$ and $\hA, \hB \in \mS$, we have
\begin{eqnarray}
 g \c (\hA \pm \hB) &=& g \c \hA \pm g \c \hB, \nonumber \\
 g \c (c \hA) &=& c (g \c \hA), \label{eq:g_c} \nonumber \\
 g \c \hat{1} &=& \hat{1}, \label{eq:g_1} \nonumber \\
 \Tr(g \c \hA) &=& \Tr~\hA, \nonumber \\
 \Tr[(g \c \hA)(g \c \hB)] &=& \Tr(\hA\hB), \nonumber \\
 g \c \hA &\in& \mS_+, ~~~~~ \forall \hA \in \mS_+, \nonumber \\
 g \c \hA &\ge& g \c \hB, ~~~ \forall \hA \ge \hB.
\end{eqnarray}
In this section, we use these facts without mentioning them.

\subsection{Group covariant optimal measurement}

As a preparation, we first prove the following lemma.

\begin{lemma} \label{lemma:sym_Pi}
 Suppose that $\POVM^\mm$ is not empty.
 Also, suppose that there exist actions of $\mG$ on $\mS$, $\mI_M$, and $\mI_J$
 such that
 \begin{eqnarray}
  g \c \ha_{j,m} &=& \ha_{g \c j, g \c m}, ~ \forall g \in \mG, j \in \mI_J, m \in \mI_M, \nonumber \\
  b_j &=& b_{g \c j}, ~~~~~~ \forall g \in \mG, j \in \mI_J. \label{eq:sym_constraint}
 \end{eqnarray}
 Let $\kappa_g(\Phi)$ and $\kappa(\Phi)$ be mappings of $\Phi \in \POVM^\mm$ expressed by
 \begin{eqnarray}
  \kappa_g(\Phi) &=& \{ \inv{g} \c \hPhi_{g \c m} : m \in \mI_M \}, \nonumber \\
  \kappa(\Phi) &=& \left\{ \frac{1}{|\mG|} \sum_{g \in \mG} \inv{g} \c \hPhi_{g \c m} : m \in \mI_M \right\}.
   \label{eq:kappa}
 \end{eqnarray}
 Then, $\kappa_g$ is a bijective mapping onto $\POVM^\mm$ for any $g \in \mG$,
 and $\kappa$ is a mapping onto $\POVM^\mm$.
 Moreover, for any $\Phi \in \POVM^\mm$, we have
 \begin{eqnarray}
  g \c \hPi_m &=& \hPi_{g \c m}, ~~ \forall g \in \mG, m \in \mI_M, \label{eq:sym_Pi}
 \end{eqnarray}
 where $\Pi = \kappa(\Phi)$.
\end{lemma}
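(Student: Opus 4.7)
The plan is threefold: (i) show that $\kappa_g$ sends $\POVM^\mm$ into itself for every $g\in\mG$; (ii) show that $\kappa_g$ is a bijection of $\POVM^\mm$ with inverse $\kappa_{\inv g}$; (iii) deduce both conclusions about $\kappa$ (membership in $\POVM^\mm$ and the covariance identity~(\ref{eq:sym_Pi})) from (i) and convexity.

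First I would tackle (i). Positivity of each $\inv g\c\hPhi_{g\c m}$ and the completeness $\sum_m\inv g\c\hPhi_{g\c m}=\hat 1$ follow from the listed action properties on $\mS$ together with the fact that $g$ acts on $\mI_M$ as a permutation, so the substitution $m\mapsto g\c m$ is a bijection of $\mI_M$. The heart of (i) is checking that the inequality constraints in Eq.~(\ref{eq:POVMmm}) survive: using trace-invariance of the action, the hypothesis $g\c\ha_{j,m}=\ha_{g\c j,g\c m}$, and the substitution $m'=g\c m$,
\[
  \sum_m\Tr\bigl(\ha_{j,m}\,\inv g\c\hPhi_{g\c m}\bigr)=\sum_m\Tr\bigl((g\c\ha_{j,m})\hPhi_{g\c m}\bigr)=\sum_{m'}\Tr(\ha_{g\c j,m'}\hPhi_{m'})\le b_{g\c j}=b_j,
\]
where the final inequality uses $\Phi\in\POVM^\mm$ and the invariance $b_{g\c j}=b_j$.

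For (ii), a direct computation using $(gh)\c x=g\c(h\c x)$ and $e\c x=x$ in both the action on $\mS$ and on $\mI_M$ shows that $\kappa_{\inv g}\circ\kappa_g$ and $\kappa_g\circ\kappa_{\inv g}$ are each the identity on $\POVM^\mm$. For (iii), $\kappa(\Phi)=|\mG|^{-1}\sum_g\kappa_g(\Phi)$ is a convex combination of elements of $\POVM^\mm$, and $\POVM^\mm$ is convex (the intersection of the convex set of POVMs with linear inequality constraints), so $\kappa(\Phi)\in\POVM^\mm$. The covariance~(\ref{eq:sym_Pi}) follows by pulling $h$ inside the average via linearity, rewriting $h\c(\inv g\c\,\cdot\,)=(h\inv g)\c\,\cdot\,$ through the composition axiom, and reindexing by $k=gh^{-1}$: since $h\inv g=\inv k$ and $g\c m=k\c(h\c m)$, and as $g$ ranges over $\mG$ so does $k$, the sum reorganizes to $\hPi_{h\c m}$.

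I do not anticipate a substantial obstacle; the whole argument is careful bookkeeping of sums reindexed over $\mG$ and $\mI_M$. The subtle point is seeing that the two hypotheses in Eq.~(\ref{eq:sym_constraint}) are combined exactly as needed in (i)---the $j$-invariance of $b_j$ absorbs the $g\c j$ produced by the covariance of $\ha_{j,m}$---and that the substitution $k=gh^{-1}$ in (iii) is precisely the one that realizes the covariance. The possibility that $\hU_g$ is anti-unitary causes no difficulty, since all the invariance properties of the action (positivity, trace, and operator ordering) used in the argument are already asserted to hold uniformly in $g$ in the list preceding the lemma.
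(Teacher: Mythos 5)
Your proposal is correct and follows essentially the same route as the paper's own proof: the same constraint computation via trace invariance and the reindexing $m'=g\c m$, the same identification of $\kappa_{\inv g}$ as the inverse of $\kappa_g$, the same averaging argument for $\kappa(\Phi)\in\POVM^\mm$ (the paper sums the constraint inequalities directly rather than invoking convexity by name, which is the same thing), and the same group reindexing for the covariance identity~(\ref{eq:sym_Pi}).
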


\begin{proof}
 First, we show that $\kappa_g$ is bijective onto $\POVM^\mm$.
 Let $\Phi \in \POVM^\mm$ and $\Phi^\g = \kappa_g(\Phi)$.
 Since $\Phi^\g_m = \inv{g} \c \hPhi_{g \c m} \in \mS+$
 and $\sum_{m=0}^{M-1} \Phi^\g_m = \inv{g} \c \ident = \ident$ hold,
 $\Phi^\g \in \POVM$ holds.
 We also obtain for any $j \in \mI_J$,
 \begin{eqnarray}
  \sum_{m=0}^{M-1} \Tr(\ha_{j,m} \hPhi^\g_m)
   &=& \sum_{m=0}^{M-1} \Tr[\ha_{j,m} (\inv{g} \c \hPhi_{g \c m})] \nonumber \\
  &=& \sum_{m=0}^{M-1} \Tr[(g \c \ha_{j,m}) \hPhi_{g \c m}] \nonumber \\
  &=& \sum_{m=0}^{M-1} \Tr(\ha_{g \c j,g \c m} \hPhi_{g \c m}) \nonumber \\
  &\le& b_{g \c j} = b_j, \label{eq:sym_Pi_Phig}
 \end{eqnarray}
 where the inequality follows from the group action being bijective and $\Phi \in \POVM^\mm$.
 Thus, $\Phi^\g \in \POVM^\mm$ holds.
 Moreover, since $\kappa_{\inv{g}}[ \kappa_g(\Phi) ] = \kappa_g[ \kappa_{\inv{g}}(\Phi) ] = \Phi$,
 $\kappa_{\inv{g}}$ is the inverse mapping of $\kappa_g$.
 Therefore, $\kappa_g$ is bijective onto $\POVM^\mm$.

 Next, we show that $\kappa$ is a mapping onto $\POVM^\mm$ and that Eq.~(\ref{eq:sym_Pi}) holds.
 From Eq.~(\ref{eq:sym_Pi_Phig}), we have that for any $j \in \mI_J$,
 \begin{eqnarray} \hspace{-1em}
  \sum_{m=0}^{M-1} \Tr(\ha_{j,m} \hPi_m)
   &=& \frac{1}{|G|} \sum_{g \in \mG} \sum_{m=0}^{M-1} \Tr(\ha_{j,m} \hPhi^\g_m) \le b_j,
 \end{eqnarray}
 which means that $\Pi \in \POVM^\mm$ holds for any $\Phi \in \POVM^\mm$,
 that is, $\kappa$ is a mapping onto $\POVM^\mm$.
 We also have that for any $g \in \mG$ and $m \in \mI_M$,
 \begin{eqnarray}
  g \c \hPi_m &=& \frac{1}{|\mG|} \sum_{h \in \mG} g \c \hPhi^{(h)}_m
   = \frac{1}{|\mG|} \sum_{h' \in \mG} \inv{h'} \c \hPhi_{h' \c g \c m} \nonumber \\
  &=& \hPi_{g \c m},
 \end{eqnarray}
 where $h' = h \c \inv{g}$.
 Thus, Eq.~(\ref{eq:sym_Pi}) holds.
 \QED
\end{proof}

We now show that a $\mG$-covariant optimal measurement exists
if optimization problem (\ref{eq:primal}) has a certain symmetry with respect to $\mG$.

\begin{thm} \label{thm:sym}
 Let us consider optimization problem (\ref{eq:primal}).
 Suppose that $\POVM^\mm$ is not empty.
 Also, suppose that there exist actions of $\mG$ on $\mS$, $\mI_M$, and $\mI_J$
 satisfying Eq.~(\ref{eq:sym_constraint}) and
 \begin{eqnarray}
  g \c \hc_m &=& \hc_{g \c m}, ~~ \forall g \in \mG, m \in \mI_M. \label{eq:sym_eval}
 \end{eqnarray}
 Then, for any $\Phi \in \POVM^\mm$
 there exists $\Pi \in \POVM^\mm$ such that $f(\Pi) = f(\Phi)$ and Eq.~(\ref{eq:sym_Pi}) hold,
 where $f$ is the objective function of problem (\ref{eq:primal}).
 In particular, an optimal measurement $\Pi$ exists satisfying Eq.~(\ref{eq:sym_Pi}).
 Moreover, there exists an optimal solution $(\hX, \lambda)$ to dual problem (\ref{eq:dual})
 such that
 \begin{eqnarray}
  g \c \hX &=& \hX, ~~~~ \forall g \in \mG, \nonumber \\
  \lambda_j &=& \lambda_{g \c j}, ~~ \forall g \in \mG, j \in \mI_J. \label{eq:sym_Xlambda}
 \end{eqnarray}
\end{thm}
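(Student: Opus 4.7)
The plan is to reduce both claims of the theorem to the averaging construction $\kappa$ of Lemma~\ref{lemma:sym_Pi}, using the extra symmetry hypothesis~(\ref{eq:sym_eval}) for the primal and a parallel group-averaging of the dual variables for the dual statement. Concretely, given any $\Phi \in \POVM^\mm$ I set $\Pi = \kappa(\Phi)$; Lemma~\ref{lemma:sym_Pi} already gives $\Pi \in \POVM^\mm$ together with the covariance relation~(\ref{eq:sym_Pi}), so all that remains for the primal is to verify $f(\Pi) = f(\Phi)$. Applying this to an optimal $\Phi$ (which exists since $\POVM^\mm$ is nonempty, closed and bounded, and $f$ is continuous) then produces a $\mG$-covariant optimal measurement.

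The identity $f(\Pi) = f(\Phi)$ is the unwinding
\begin{eqnarray*}
 f(\Pi) &=& \frac{1}{|\mG|} \sum_{g \in \mG} \sum_{m=0}^{M-1} \Tr\bigl[\hc_m (\inv{g} \c \hPhi_{g \c m})\bigr] \\
 &=& \frac{1}{|\mG|} \sum_{g \in \mG} \sum_{m=0}^{M-1} \Tr\bigl[(g \c \hc_m)\, \hPhi_{g \c m}\bigr] \\
 &=& \frac{1}{|\mG|} \sum_{g \in \mG} \sum_{m=0}^{M-1} \Tr(\hc_{g \c m}\, \hPhi_{g \c m}) = f(\Phi),
\end{eqnarray*}
where I use the trace invariance listed after Eq.~(\ref{eq:group_action_S}), the hypothesis~(\ref{eq:sym_eval}), and the fact that $m \mapsto g \c m$ is a bijection of $\mI_M$ for each fixed $g$.

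For the dual claim I take an optimal $(\hX', \lambda')$ of problem~(\ref{eq:dual}), which exists by Theorem~\ref{thm:dual}, and define
\begin{eqnarray*}
 \hX = \frac{1}{|\mG|} \sum_{g \in \mG} g \c \hX', \qquad \lambda_j = \frac{1}{|\mG|} \sum_{g \in \mG} \lambda'_{\inv{g} \c j}.
\end{eqnarray*}
Applying $g \c$ to the feasibility inequality $\hX' \ge \hz_{m'}(\lambda')$ and using~(\ref{eq:sym_constraint}) and~(\ref{eq:sym_eval}) yields $g \c \hX' \ge \hz_{g \c m'}(\lambda^\g)$ with $\lambda^\g_j := \lambda'_{\inv{g} \c j}$; reindexing $n = g \c m'$ and averaging over $g$ gives $\hX \ge \hz_n(\lambda)$ for every $n \in \mI_M$, so $(\hX, \lambda)$ is dual feasible. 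Its cost satisfies $s(\hX, \lambda) = s(\hX', \lambda')$, since each $g$-summand reduces to $s(\hX', \lambda')$ after the change of index $j \mapsto \inv{g} \c j$ combined with $\Tr(g \c \hX') = \Tr\, \hX'$ and $b_{g \c j} = b_j$, so $(\hX, \lambda)$ is dual optimal. The invariances~(\ref{eq:sym_Xlambda}) then follow by the changes of summation variable $g \mapsto hg$ in the average defining $\hX$ and $g \mapsto g \inv{h}$ in the average defining $\lambda$.

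The only real obstacle I anticipate is the index bookkeeping on $\mI_J$: one must take the convention $\lambda^\g_j = \lambda'_{\inv{g} \c j}$ rather than $\lambda'_{g \c j}$, so that the term-wise inequality $g \c \hX' \ge \hz_{g \c m'}(\lambda^\g)$ can be aggregated into $\hX \ge \hz_n(\lambda)$ after averaging; with the opposite convention the multipliers on the two sides would be mismatched and feasibility of the average would fail. Once that convention is fixed, every remaining step uses only the elementary properties of the group action collected in the display preceding Subsec.~\textit{Group covariant optimal measurement} together with the nonemptiness of $\POVM^\mm$ that underlies Theorem~\ref{thm:dual}.
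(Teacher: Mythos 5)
Your proposal is correct and follows essentially the same route as the paper: the primal part averages $\Phi$ via $\kappa$ from Lemma~\ref{lemma:sym_Pi} and verifies $f(\kappa(\Phi))=f(\Phi)$ by the same trace-invariance/reindexing chain, and the dual part group-averages an optimal $(\hX',\lambda')$ with exactly the paper's convention $\lambda^\g_j=\lambda'_{\inv{g}\c j}$ to preserve feasibility and cost. No gaps worth noting.
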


As examples of Theorem~\ref{thm:sym},
we can derive that there exist a minimum error measurement,
an optimal unambiguous measurement, and an optimal inconclusive measurement
that are $\mG$-covariant if a given state set is $\mG$-covariant,
which is shown in Ref.~\cite{Nak-Usu-2013-group}.

Note that let $\POVM_\mG^\mm$ be the entire set of $\Pi \in \POVM^\mm$ satisfying Eq.~(\ref{eq:sym_Pi});
then, we can easily see that, since $\POVM_\mG^\mm$ is convex,
problem (\ref{eq:primal}) remains in convex programming
even if we restrict the feasible set from $\POVM^\mm$ to $\POVM_\mG^\mm$.

\begin{proof}
 First, we show that $\Pi \in \POVM^\mm$ exists such that
 $f(\Pi) = f(\Phi)$ and Eq.~(\ref{eq:sym_Pi}) hold for any $\Phi \in \POVM^\mm$.
 Let $\Pi = \kappa(\Phi)$, where $\kappa$ is defined by Eq.~(\ref{eq:kappa}).
 From Lemma~\ref{lemma:sym_Pi}, $\Pi$ satisfies $\Pi \in \POVM^\mm$ and Eq.~(\ref{eq:sym_Pi}).
 Moreover, we obtain
 \begin{eqnarray}
  f(\Pi) &=& \sum_{m=0}^{M-1} \Tr(\hc_m \hPi_m) \nonumber \\
  &=& \frac{1}{|\mG|} \sum_{m=0}^{M-1} \sum_{g \in \mG} \Tr[\hc_m (\inv{g} \c \hPhi_{g \c m})] \nonumber \\
  &=& \frac{1}{|\mG|} \sum_{m=0}^{M-1} \sum_{g \in \mG} \Tr[(g \c \hc_m) \hPhi_{g \c m}] \nonumber \\
  &=& \frac{1}{|\mG|} \sum_{g \in \mG} \sum_{m=0}^{M-1} \Tr(\hc_{g \c m} \hPhi_{g \c m}) \nonumber \\
  &=& \frac{1}{|\mG|} \sum_{g \in \mG} f(\Phi) = f(\Phi).
 \end{eqnarray}
 In particular, if $\Phi$ is an optimal measurement, then so is $\Pi$.

 Next, we show that there exists an optimal solution $(\hX, \lambda)$ to dual problem (\ref{eq:dual})
 satisfying Eq.~(\ref{eq:sym_Xlambda}).
 Let $\nu = \{ \nu_j : j \in \mI_J \} \in \Real_+^J$.
 Suppose that $(\hY, \nu)$ is an optimal solution to problem (\ref{eq:dual}).
 Also, let $\hY^\g = g \c \hY$ and $\nu^\g = \{ \nu_j^\g = \nu_{\inv{g} \c j} : j \in \mI_J \}$.
 $\hY^\g \in \mS$ and $\nu^\g \in \Real_+^J$ obviously hold.
 We obtain for any $g \in \mG$ and $m \in \mI_M$,
 \begin{eqnarray}
  \hY^\g &\ge& g \c \hz_m(\nu) = \hc_{g \c m} - \sum_{j=0}^{J-1} \nu_j \ha_{g \c j,g \c m} \nonumber \\
  &=& \hc_{g \c m} - \sum_{j=0}^{J-1} \nu_{g \c j}^\g \ha_{g \c j,g \c m} = \hz_{g \c m}(\nu^\g).
   \label{eq:sym_Xl1}
 \end{eqnarray}
 We also obtain
 \begin{eqnarray}
  s(\hY^\g, \nu^\g) &=& \Tr~\hY^\g + \sum_{j=0}^{J-1} \nu^\g_j b_j \nonumber \\
  &=& \Tr~\hY + \sum_{j=0}^{J-1} \nu_{\inv{g} \c j} b_{\inv{g} \c j} = s(\hY, \nu). \label{eq:sym_Xl2}
 \end{eqnarray}
 From Eqs.~(\ref{eq:sym_Xl1}) and (\ref{eq:sym_Xl2}),
 $(\hY^\g, \nu^\g)$ is also an optimal solution to problem (\ref{eq:dual}).
 Let $\hX \in \mS$ and $\lambda = \{ \lambda_j : j \in \mI_J \} \in \Real_+^J$ be expressed by
 \begin{eqnarray}
  \hX &=& \frac{1}{|\mG|} \sum_{g \in \mG} \hY^\g, ~~~
   \lambda_j = \frac{1}{|\mG|} \sum_{g \in \mG} \nu_j^\g. \label{eq:sym_Xlambda2}
 \end{eqnarray}
 We can easily see that Eq.~(\ref{eq:sym_Xlambda}) holds.
 For any $m \in \mI_M$, we have
 \begin{eqnarray}
  \hz_m(\lambda) &=& \hc_m - \frac{1}{|\mG|} \sum_{g \in \mG} \sum_{j=0}^{J-1} \nu_j^\g \ha_{j,m} \nonumber \\
  &=& \frac{1}{|\mG|} \sum_{g \in \mG} \left( \hc_m - \sum_{j=0}^{J-1} \nu_j^\g \ha_{j,m} \right) \nonumber \\
  &=& \frac{1}{|\mG|} \sum_{g \in \mG} \hz_m(\nu^\g). \label{eq:sym_z_lambda}
 \end{eqnarray}
 From Eqs.~(\ref{eq:sym_Xl1}), (\ref{eq:sym_Xlambda2}), and (\ref{eq:sym_z_lambda}), we obtain for any $m \in \mI_M$,
 \begin{eqnarray}
  \hX - \hz_m(\lambda) &=& \frac{1}{|\mG|} \sum_{g \in \mG} [\hY^\g - \hz_m(\nu^\g)] \ge 0.
 \end{eqnarray}
 Moreover, from Eqs.~(\ref{eq:sym_Xl2}) and (\ref{eq:sym_Xlambda2}), we have
 \begin{eqnarray}
  s(\hX, \lambda) &=& \Tr~\hX + \sum_{j=0}^{J-1} \lambda_j b_j \nonumber \\
  &=& \frac{1}{|\mG|} \sum_{g \in \mG} (\Tr~\hY^{(b)} + \nu_j^\g b_j) = s(\hY, \nu).
 \end{eqnarray}
 Therefore, $(\hX, \nu)$ is also an optimal solution to problem (\ref{eq:dual}).
 \QED
\end{proof}

\subsection{Group covariant minimax solution}

Similar to Theorem~\ref{thm:sym},
we can show that if Eq.~(\ref{eq:fmm}) has a certain symmetry with respect to $\mG$,
then there exists a $\mG$-covariant minimax solution.

\begin{thm} \label{thm:sym_minimax}
 Let us consider a minimax solution to Eq.~(\ref{eq:fmm}).
 Suppose that $\POVM^\mm$ is not empty.
 Also, suppose that there exist actions of $\mG$ on $\mS$, $\mI_M$, $\mI_J$, and $\mI_K$
 satisfying Eq.~(\ref{eq:sym_constraint}) and
 \begin{eqnarray}
  g \c \hc_{k,m} &=& \hc_{g \c k,g \c m}, ~~ \forall g \in \mG, k \in \mI_K, m \in \mI_M, \nonumber \\
  d_k &=& d_{g \c k}, ~~~~~~~ \forall g \in \mG, k \in \mI_K. \label{eq:sym_minimax_cd}
 \end{eqnarray}
 Then, a minimax solution $(\mu, \Pi)$ exists such that
 \begin{eqnarray}
  \mu_k &=& \mu_{g \c k}, ~~~ \forall g \in \mG, k \in \mI_K, \nonumber \\
  g \c \hPi_m &=& \hPi_{g \c m}, ~~ \forall g \in \mG, m \in \mI_M. \label{eq:sym_minimax}
 \end{eqnarray}
\end{thm}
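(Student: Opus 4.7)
The plan is to adapt the symmetrization argument of Theorem~\ref{thm:sym} to the bilinear minimax setting by averaging both the probability vector $\mu$ and the POVM $\Pi$ over the orbit of $\mG$. Since $\POVM^\mm$ is nonempty, Eq.~(\ref{eq:minimax}) produces some (not necessarily symmetric) minimax solution $(\mu^\star, \Pi^\star)$, which by Eq.~(\ref{eq:minimax_saddle}) is a saddle point of $F$. For each $g \in \mG$ I would introduce the transported pair $\mu^\g = \{\mu^\star_{g \c k}\}_{k \in \mI_K} \in \Prob$ and $\Pi^\g = \kappa_g(\Pi^\star) \in \POVM^\mm$, where $\kappa_g$ is the bijection supplied by Lemma~\ref{lemma:sym_Pi}, and then define the averaged candidate $\mu_k = |\mG|^{-1} \sum_{g \in \mG} \mu^\g_k$ and $\hPi_m = |\mG|^{-1} \sum_{g \in \mG} \hPi^\g_m$. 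Lemma~\ref{lemma:sym_Pi} together with an analogous reindexing of $\mu$ guarantees that $(\mu, \Pi)$ already satisfies the invariance Eq.~(\ref{eq:sym_minimax}); all that remains is to check that it is a minimax solution.

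The heart of the argument is a single transport identity,
\begin{eqnarray}
 f_k\bigl(\kappa_g(\Phi)\bigr) &=& f_{g \c k}(\Phi),
  ~~ \forall g \in \mG,~ k \in \mI_K,~ \Phi \in \POVM^\mm,
\end{eqnarray}
which follows from the trace identities listed after Eq.~(\ref{eq:group_action_S}) combined with the symmetry hypotheses Eq.~(\ref{eq:sym_minimax_cd}); the equality $d_{g \c k} = d_k$ is needed precisely to absorb the additive constant in $f_k$. Together with a parallel reindexing of the $\mu$-sum, this yields $F(\mu^\g, \Phi) = F(\mu^\star, \kappa_{\inv g}(\Phi))$ for any $\Phi \in \POVM^\mm$, and by symmetry a similar formula relating $F(\mu', \Pi^\g)$ to an $F$-value at $\Pi^\star$.

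Combining these with linearity of $F$ in each slot (holding the other fixed) and the saddle-point property of $(\mu^\star, \Pi^\star)$, for any $\Pi' \in \POVM^\mm$ I would obtain
\begin{eqnarray}
 F(\mu, \Pi') &=& \frac{1}{|\mG|} \sum_{g \in \mG} F(\mu^\g, \Pi') \nonumber \\
 &=& \frac{1}{|\mG|} \sum_{g \in \mG} F(\mu^\star, \kappa_{\inv g}(\Pi')) \nonumber \\
 &\le& F(\mu^\star, \Pi^\star),
\end{eqnarray}
since each $\kappa_{\inv g}(\Pi') \in \POVM^\mm$. An analogous calculation using linearity in the first slot gives $F(\mu', \Pi) \ge F(\mu^\star, \Pi^\star)$ for every $\mu' \in \Prob$. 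Specializing to $\Pi' = \Pi$ and $\mu' = \mu$ sandwiches $F(\mu, \Pi)$ into the value $F(\mu^\star, \Pi^\star)$, and the two displayed inequalities then become precisely the saddle-point conditions Eq.~(\ref{eq:minimax_saddle}) for $(\mu, \Pi)$. The only subtle point is the bilinearity of $F$: simultaneously averaging $\mu$ and $\Pi$ does not a priori preserve the value $F(\mu, \Pi)$, which is exactly why the two saddle-point inequalities must be verified separately, each time freezing one argument and exploiting linearity only in the other.
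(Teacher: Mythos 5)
Your proposal is correct: the candidate $(\mu,\Pi)$ you construct by orbit-averaging is exactly the one used in the paper ($\mu_k = |\mG|^{-1}\sum_{g}\eta^\star_{g\c k}$, $\Pi=\kappa(\Pi^\star)$), and the transport identity $f_k(\kappa_g(\Phi)) = f_{g\c k}(\Phi)$ that you isolate is the computational core of both arguments. Where you diverge is in the verification that the averaged pair is still a minimax solution. The paper routes this through Theorem~\ref{thm:minimax}: it establishes $f_k(\Pi)\ge F^\star(\eta^\star)$ by averaging $f_{g\c k}(\Pi^\star)$ over the orbit, separately proves $F^\star(\eta^\g)=F^\star(\eta^\star)$ and hence $F^\star(\eta^\star)\ge F^\star(\mu)$ by convexity of $F^\star$ as a maximum of affine functions, and then invokes the implication (2)~$\Rightarrow$~(1) of that theorem. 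You instead verify the two saddle-point inequalities of Eq.~(\ref{eq:minimax_saddle}) directly, freezing one argument at a time and using linearity in the other slot together with the saddle-point property of $(\mu^\star,\Pi^\star)$; this bypasses Theorem~\ref{thm:minimax} and the auxiliary function $F^\star$ entirely, at the cost of having to carry out two symmetric computations rather than one. Both routes are sound; yours is slightly more self-contained, while the paper's reuses its own characterization of minimax solutions and so is shorter given that machinery. Your closing remark about bilinearity is well taken and is precisely why the two inequalities must be checked separately rather than by a single averaging of $F(\mu^\star,\Pi^\star)$.
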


\begin{proof}
 Let $(\eta^\star, \Pi^\star)$ be a minimax solution to Eq.~(\ref{eq:fmm}).
 Also, let $\mu = \{ \mu_k = |\mG|^{-1} \sum_{g \in \mG} \eta^\star_{g \c k} : k \in \mI_K \}$
 and $\Pi = \kappa(\Pi^\star)$,
 where $\kappa$ is defined by Eq.~(\ref{eq:kappa}).
 Then, it follows that $\mu \in \Prob$, $\Pi \in \POVM^\mm$, and Eq.~(\ref{eq:sym_minimax}) hold
 (also see Lemma~\ref{lemma:sym_Pi}).
 Here, we show that $(\mu, \Pi)$ is a minimax solution to Eq.~(\ref{eq:fmm}).
 From Statement~(2) of Theorem~\ref{thm:minimax},
 it suffices to show that $f_k(\Pi) \ge F^\star(\mu)$ holds for any $k \in \mI_K$.
 We will show $f_k(\Pi) \ge F^\star(\eta^\star)$ and $F^\star(\eta^\star) \ge F^\star(\mu)$.

 First, we show $f_k(\Pi) \ge F^\star(\eta^\star)$ for any $k \in \mI_K$.
 Let $\Pi^\g = \kappa_g(\Pi^\star)$; then for any $k \in \mI_K$, we have
 \begin{eqnarray}
  f_k(\Pi) &=& \frac{1}{|\mG|} \sum_{m=0}^{M-1} \sum_{g \in \mG} \Tr(\hc_{k,m} \hPi^\g_m) + d_k \nonumber \\
  &=& \frac{1}{|\mG|} \sum_{g \in \mG} \left[ \sum_{m=0}^{M-1} \Tr(\hc_{k,m} \hPi^\g_m) + d_k \right] \nonumber \\
  &=& \frac{1}{|\mG|} \sum_{g \in \mG} \left[ \sum_{m=0}^{M-1} \Tr[(g \c \hc_{k,m}) \hPi^\star_{g \c m}] + d_k \right]
   \nonumber \\
  &=& \frac{1}{|\mG|} \sum_{g \in \mG} \left[ \sum_{m'=0}^{M-1} \Tr(\hc_{g \c k,m'} \hPi^\star_{m'}) + d_k \right]
   \nonumber \\
  &=& \frac{1}{|\mG|} \sum_{g \in \mG} f_{g \c k}(\Pi^\star) \ge F^\star(\eta^\star),
 \end{eqnarray}
 where $m' = g \c m$.
 The inequality in the last line follows from $f_k(\Pi^\star) \ge F^\star(\eta^\star)$ for any $k \in \mI_K$,
 which is obtained from Theorem~\ref{thm:minimax}.

 Next, we show $F^\star(\eta^\star) \ge F^\star(\mu)$.
 Let $\eta^\g = \{ \eta^\star_{g \c k} : k \in \mI_K \}$.
 We have that for any $g \in \mG$,
 \begin{eqnarray}
  F^\star(\eta^\g) &=& \max_{\Phi \in \POVM^\mm} \sum_{k=0}^{K-1} \eta^\star_{g \c k}
   \left[ \sum_{m=0}^{M-1} \Tr(\hc_{k,m}\hPhi_m) + d_k \right] \nonumber \\
  &=& \max_{\Phi \in \POVM^\mm} \sum_{k'=0}^{K-1} \eta^\star_{k'}
   \left[ \sum_{m=0}^{M-1} \Tr[\hc_{k',m'}(g \c \hPhi_m)] + d_{k'} \right] \nonumber \\
  &=& \max_{\Phi' \in \POVM^\mm} \sum_{k'=0}^{K-1} \eta^\star_{k'}
   \left[ \sum_{m'=0}^{M-1} \Tr(\hc_{k',m'}\hPhi'_{m'}) + d_{k'} \right] \nonumber \\
  &=& F^\star(\eta^\star), \label{eq:sym_minimax_Fstar_eta}
 \end{eqnarray}
 where $k' = g \c k$, $m' = g \c m$, and $\Phi' = \kappa_{\inv{g}}(\Phi)$.
 The third line follows from the mapping $\kappa_{\inv{g}}$ being bijective onto $\POVM^\mm$
 (see Lemma~\ref{lemma:sym_Pi}).
 From Eq.~(\ref{eq:sym_minimax_Fstar_eta}), we obtain
 \begin{eqnarray}
  F^\star(\mu) &=& \max_{\Phi \in \POVM^\mm} \frac{1}{|\mG|} \sum_{g \in \mG} \sum_{k=0}^{K-1} \eta^\g_k f_k(\Phi)
   \nonumber \\
  &\le& \frac{1}{|\mG|} \sum_{g \in \mG} F^\star(\eta^\g) = F^\star(\eta^\star).
 \end{eqnarray}
 Therefore, $(\mu, \Pi)$ is a minimax solution.
 \QED
\end{proof}

\section{Examples of optimal measurement and minimax solution} \label{sec:example}

As an example of a generalized optimal measurement,
we discuss the problem of finding an optimal inconclusive measurement
with a lower bound on success probabilities, which is introduced in Subsubsec.~\ref{subsubsec:bounded_inc}.
Also, as an example of a generalized minimax solution,
we discuss the problem of finding a minimax solution for plural state sets,
which is introduced in Subsubsec.~\ref{subsubsec:Pc_minimax}.
Moreover, Tables~\ref{tab:example} and \ref{tab:minimax_example} summarize the problem formulations
and their examples shown in Subsecs.~\ref{subsec:example} and \ref{subsec:minimax_example},
respectively.

\begin{table*}[tbp]
 \caption{Basic formulation of generalized optimal measurements and its examples.}
 \label{tab:example}
 \begin{center}
  \tabcolsep = 0.4em
  \begin{tabular}{p{0.1em}lll}
   \hline \hline
   & \multicolumn{1}{c}{Primal problems} & \multicolumn{1}{c}{Dual problems} & \multicolumn{1}{c}{Necessary and sufficient conditions} \\
   & & & \multicolumn{1}{c}{(Statement~(3) of Theorem~\ref{thm:condition})} \\ \hline
   \multicolumn{4}{l}{Basic formulation} \\
   & \raisebox{-0.0em}{\shortstack[l]{
     ${\rm maximize} ~ \displaystyle \sum_{m=0}^{M-1} \Tr(\hc_m \hPi_m)$ \\
     ${\rm subject~to} ~ \Pi \in \POVM$, \\
     \hspace{1em} $\displaystyle \sum_{m=0}^{M-1} \Tr(\ha_{j,m} \hPi_m) \le b_j, ~ \forall j \in \mI_J$ \\
     \hfill (\ref{eq:primal}),(\ref{eq:POVMmm})}} &
   \raisebox{-0.4em}{\shortstack[l]{
     ${\rm minimize} ~~ \displaystyle \Tr~\hX + \sum_{j=0}^{J-1} \lambda_j b_j$ \\
     ${\rm subject~to} ~ \hX \ge \hz_m(\lambda), ~ \forall m \in \mI_M$ \\
     where ~ $\displaystyle \hz_m(\lambda) = \hc_m - \sum_{j=0}^{J-1} \lambda_j \ha_{j,m}$ \\
     \hfill (\ref{eq:dual}),(\ref{eq:zm})}} &
   \raisebox{-2.4em}{\shortstack[l]{
     $\lambda \in \Real_+^J$ exists such that \\
     \hspace{0.5em} $\hX(\lambda) \ge \hz_m(\lambda), ~ \forall m \in \mI_M$, \\
     \hspace{0.5em} $\displaystyle \lambda_j \left[ b_j - \sum_{m=0}^{M-1} \Tr(\ha_{j,m} \hPi_m) \right] = 0,
                    ~ \forall j \in \mI_J$ \\
     where ~ $\displaystyle \hX(\lambda) = \sum_{n=0}^{M-1} \hz_n(\lambda) \hPi_n$ \\
     \hfill (\ref{eq:cond_Xz}),(\ref{eq:cond_XzPi}),(\ref{eq:cond_b})}} \\
   \multicolumn{4}{l}{Example~1: Optimal measurement in the Bayes criterion (Subsec.~\ref{subsubsec:bayes})
     \cite{Hol-1973,Yue-Ken-Lax-1975,Hel-1976}} \\
   & \raisebox{-2.1em}{\shortstack[l]{
	 ${\rm minimize} ~~ \displaystyle \sum_{m=0}^{R-1} \Tr(\hW_m \hPi_m)$ \\
     ${\rm subject~to} ~ \Pi \in \POVM$
     \hfill (\ref{eq:bayes_primal})}} &
   \raisebox{-1.0em}{\shortstack[l]{
	 ${\rm maximize} ~ \displaystyle \Tr~\hX$ \\
     $\displaystyle {\rm subject~to} ~ \hW_m \ge \hX, ~ \forall m \in \mI_R$}} &
   \raisebox{-0.0em}{\shortstack[l]{
     $\displaystyle \hW_m \ge \sum_{r=0}^{R-1} \hW_r \hPi_r, ~ \forall m \in \mI_R$}} \\
   \multicolumn{4}{l}{Example~2: Optimal error margin measurement (Subsec.~\ref{subsubsec:error_margin})
     \cite{Tou-Ada-Ste-2007,Hay-Has-Hor-2008,Sug-Has-Hor-Hay-2009}} \\
   & \raisebox{-3.8em}{\shortstack[l]{
     ${\rm maximize} ~ \displaystyle \sum_{r=0}^{R-1} \xi_r \Tr(\hrho_r \hPi_r)$ \\
     ${\rm subject~to} ~ \Pi \in \POVM$, \\
     \hspace{1em} $\displaystyle \sum_{r=0}^{R-1} \xi_r \Tr[\hrho_r (\hPi_r + \hPi_R)] \ge 1 - \varepsilon$ \\
     \hfill (\ref{eq:err_primal})}} &
   \raisebox{-0.0em}{\shortstack[l]{
	 ${\rm minimize} ~~ \Tr~\hX - \lambda(1 - \varepsilon)$ \\
     ${\rm subject~to} ~ \hX \ge (1 + \lambda) \xi_r \hrho_r, ~ \forall r \in \mI_R$, \\
     \hspace{1em} $\hX \ge \lambda \hG$}} &
   \raisebox{-5.6em}{\shortstack[l]{
     $\lambda \in \Real_+$ exists such that \\
     \hspace{0.5em} $\hX(\lambda) \ge (1 + \lambda) \xi_r \hrho_r, ~ \forall r \in \mI_R$, \\
     \hspace{0.5em} $\hX(\lambda) \ge \lambda \hG$, \\
     \hspace{0.5em} $\displaystyle \lambda \left[ \sum_{r=0}^{R-1} \xi_r \Tr[\hrho_r (\hPi_r + \hPi_R)]
                   - 1 + \varepsilon \right] = 0$ \\
     where ~ $\displaystyle \hX(\lambda) = (1 + \lambda) \sum_{r=0}^{R-1} \xi_r \hrho_r \hPi_r
              + \lambda \hG \hPi_R$}} \\
   \multicolumn{4}{l}{Example~3: Optimal inconclusive measurement with a lower bound on success probabilities (Subsec.~\ref{subsubsec:bounded_inc})} \\
   & \raisebox{-1.2em}{\shortstack[l]{
	 ${\rm maximize} ~ \displaystyle \sum_{r=0}^{R-1} \xi_r \Tr(\hrho_r \hPi_r)$ \\
     ${\rm subject~to} ~ \Pi \in \POVM$, \\
     \hspace{1em} $\Tr(\hrho_r \hPi_r) \ge q, ~ \forall r \in \mI_R$, \\
     \hspace{1em} $\Tr(\hG\hPi_R) \ge p$
     \hfill (\ref{eq:bounded_inc_primal})}} &
   \raisebox{-0.0em}{\shortstack[l]{
	 ${\rm minimize} ~~ \displaystyle \Tr~\hX - q \sum_{r=0}^{R-1} \lambda_r - p \lambda_R$ \\
     $\displaystyle {\rm subject~to} ~ \hX \ge (\xi_r + \lambda_r) \hrho_r, ~ \forall r \in \mI_R$, \\
     \hspace{1em} $\hX \ge \lambda_R \hG$
     \hfill (\ref{eq:bounded_inc_dual})}} &
   \raisebox{-7.7em}{\shortstack[l]{
     $\lambda \in \Real_+^{R+1}$ exists such that \\
     \hspace{0.5em} $\hX \ge (\xi_r + \lambda_r) \hrho_r, ~ \forall r \in \mI_R$, \\
     \hspace{0.5em} $\hX \ge \lambda_R \hG$, \\
     \hspace{0.5em} $\lambda_r \left[ \Tr(\hrho_r \hPi_r) - q \right] = 0, ~ \forall r \in \mI_R$, \\
     \hspace{0.5em} $\lambda_R \left[ \Tr(\hG\hPi_R) - p \right] = 0$ \\
     where ~ $\displaystyle \hX(\lambda) = \sum_{r=0}^{R-1} (\xi_r + \lambda_r) \hrho_r \hPi_r
              + \lambda_R \hG \hPi_R$ \\
     \hfill (\ref{eq:bounded_inc_cond}),(\ref{eq:bounded_inc_X})}} \\
   \hline \hline
  \end{tabular}
 \end{center}
\end{table*}

\begin{table*}[tbp]
 \caption{Basic formulation of generalized minimax solutions and its examples.}
 \label{tab:minimax_example}
 \begin{center}
  \tabcolsep = 0.5em
  \begin{tabular}{p{0.1em}ll}
   \hline \hline
   & \multicolumn{1}{c}{Problems} & \multicolumn{1}{c}{Necessary and sufficient conditions} \\
   & & \multicolumn{1}{c}{(Statement~(3) of Theorem~\ref{thm:minimax})} \\ \hline
   \multicolumn{3}{l}{Basic formulation} \\
   & \raisebox{-11.8em}{\shortstack[l]{
     $\displaystyle {\rm maximize} \min_{u \in \Prob} F(\mu, \Pi)$ \\
     ${\rm subject~to} ~ \Pi \in \POVM^\mm$ \\
     where ~ $\displaystyle F(\mu, \Pi) = \sum_{k=0}^{K-1} \mu_k f_k(\Pi)$, \\
     \hspace{1em} ~~~ $\displaystyle f_k(\Pi) = \sum_{m=0}^{M-1} \Tr(\hc_{k,m} \hPi_m) + d_k$, \\
     \hspace{1em} ~~~ $\displaystyle \POVM^\mm = \left\{ \Pi \in \POVM : \sum_{m=0}^{M-1} \Tr(\ha_{j,m} \hPi_m)
                       \le b_j, ~ \forall j \in \mI_J \right\}$ \\
     \hfill (\ref{eq:POVMmm}),(\ref{eq:fmm}),(\ref{eq:minimax})}} &
   \raisebox{-0.0em}{\shortstack[l]{
	 $F^\star(\mu^\star) = F(\mu^\star, \Pi^\star)$, \\
     $f_k(\Pi^\star) \ge f_{k'}(\Pi^\star), ~ \forall k,k' \in \mI_K ~{\rm s.t.}~\mu^\star_{k'} > 0$
      ~~ (\ref{eq:thm_minimax3})}} \\
   \multicolumn{3}{l}{Example~1: Minimax solution in Bayes strategy \cite{Kat-2012}} \\
   & \raisebox{-0.0em}{\shortstack[l]{
     $\displaystyle {\rm minimize} \max_{u \in \Prob} \sum_{k=0}^{R-1} \mu_k \sum_{m=0}^{R-1}
	  B_{m,k} \Tr(\hrho_k \hPi_m)$ \\
     ${\rm subject~to} ~ \Pi \in \POVM$
     \hfill (\ref{eq:minimax_bayes})}} &
   \raisebox{-2.5em}{\shortstack[l]{
	 $F^\star(\mu^\star) = F(\mu^\star, \Pi^\star)$, \\
     $\displaystyle \sum_{m=0}^{R-1} B_{m,k} \Tr(\hrho_k \hPi^\star_m) \le
      \sum_{m=0}^{R-1} B_{m,k'} \Tr(\hrho_{k'} \hPi^\star_m)$, \\
     \hspace{1em} $\forall k,k' \in \mI_R ~{\rm s.t.}~\mu^\star_{k'} > 0$}} \\
   \multicolumn{3}{l}{Example~2: Inconclusive minimax solution \cite{Nak-Kat-Usu-2013-minimax}} \\
   & \raisebox{-1.0em}{\shortstack[l]{
     $\displaystyle {\rm maximize} \min_{u \in \Prob} \sum_{k=0}^{R-1} \mu_k \Tr[\hrho_k (\hPi_k + \hPi_R)]$ \\
     ${\rm subject~to} ~ \Pi \in \POVM^\mm$ \\
     where ~ $\displaystyle \POVM^\mm = \left\{ \Pi \in \POVM : \Tr(\hrho_j \hPi_R) \le p,
              ~ \forall j \in \mI_R \right\}$
     \hfill (\ref{eq:inc_minimax_primal})}} &
   \raisebox{-0.0em}{\shortstack[l]{
	 $F^\star(\mu^\star) = F(\mu^\star, \Pi^\star)$, \\
     $\displaystyle \Tr[\hrho_k (\hPi^\star_k + \hPi^\star_R)] \ge
      \Tr[\hrho_{k'} (\hPi^\star_{k'} + \hPi^\star_R)]$, \\
     \hspace{1em} $\forall k,k' \in \mI_R ~{\rm s.t.}~\mu^\star_{k'} > 0$}} \\
   \multicolumn{3}{l}{Example~3: Minimax solution for plural state sets} \\
   & \raisebox{-0.0em}{\shortstack[l]{
     $\displaystyle {\rm maximize} \min_{u \in \Prob} \sum_{k=0}^{K-1} \mu_k \sum_{m=0}^{R-1}
      \Tr(\hrho'_{k,m} \hPi_m)$ \\
     ${\rm subject~to} ~ \Pi \in \POVM$
     \hfill (\ref{eq:Pc_minimax_primal})}} &
   \raisebox{-2.5em}{\shortstack[l]{
	 $F^\star(\mu^\star) = F(\mu^\star, \Pi^\star)$, \\
     $\displaystyle \sum_{m=0}^{R-1} \Tr(\hrho'_{k,m} \hPi^\star_m) \ge
      \sum_{m=0}^{R-1} \Tr(\hrho'_{k',m} \hPi^\star_m)$, \\
     \hspace{1em} $\forall k,k' \in \mI_K ~{\rm s.t.}~\mu^\star_{k'} > 0$}} ~ (\ref{eq:minimax_plural}) \\
   \hline \hline
  \end{tabular}
 \end{center}
\end{table*}

\subsection{Optimal inconclusive measurement with a lower bound on success probabilities}
 \label{subsec:example_bounded_inc}

In this example, we can apply Theorems~\ref{thm:dual} and \ref{thm:condition}.
Substituting Eq.~(\ref{eq:bounded_inc_abc}) into Eq.~(\ref{eq:zm}) gives
\begin{eqnarray}
 \hz_m(\lambda) &=&
  \left\{
   \begin{array}{ll}
	(\xi_m + \lambda_m) \hrho_m, & ~ m < R, \\
	\lambda_R \hG, & ~ m = R.
   \end{array} \right.
\end{eqnarray}
Thus, from Theorem~\ref{thm:dual}, dual problem (\ref{eq:dual}) can be rewritten as
\begin{eqnarray}
 \begin{array}{ll}
  {\rm minimize} & \displaystyle s(\hX, \lambda) = \Tr~\hX - q \sum_{r=0}^{R-1} \lambda_r - p \lambda_R \\
  {\rm subject~to} & \hX \ge (\xi_r + \lambda_r) \hrho_r, ~~ \forall r \in \mI_R,  \\
  & \hX \ge \lambda_R \hG. \label{eq:bounded_inc_dual}
 \end{array}
\end{eqnarray}
$\lambda_R \hG \in \mS_+$ yields $\hX \in \mS_+$.
In particular, in the case of $q = 0$, Eq.~(\ref{eq:bounded_inc_dual}) is equivalent to
the dual problem of finding an optimal inconclusive measurement,
which is shown in Theorem~1 of Ref.~\cite{Eld-2003-inc}.

We can also obtain necessary and sufficient conditions for an optimal measurement
from Theorem~\ref{thm:condition}.
For example, from Statement~(3) of this theorem,
$\Pi \in \POVM^\mm$ is an optimal measurement of problem (\ref{eq:bounded_inc_primal})
if and only if $\lambda \in \Real_+^J$ exists such that
\begin{eqnarray}
 \hX(\lambda) - (\xi_r + \lambda_r) \hrho_r &\ge& 0, ~~ \forall r \in \mI_R, \nonumber \\
 \hX(\lambda) - \lambda_R \hG &\ge& 0, \nonumber \\
 \lambda_r \left[ \Tr(\hrho_r \hPi_r) - q \right] &=& 0, ~~~ \forall r \in \mI_R, \nonumber \\
 \lambda_R \left[ \Tr(\hG\hPi_R) - p \right] &=& 0, \label{eq:bounded_inc_cond}
\end{eqnarray}
where
\begin{eqnarray}
 \hX(\lambda) &=& \sum_{r=0}^{R-1} (\xi_r + \lambda_r) \hrho_r \hPi_r + \lambda_R \hG \hPi_R.
  \label{eq:bounded_inc_X}
\end{eqnarray}

In the case in which problem (\ref{eq:bounded_inc_primal}) has a certain symmetry,
we can apply Theorem~\ref{thm:sym}.
Suppose that a given state set is $\mG$-covariant, that is,
there exist actions of $\mG$ on $\mS$ and $\mI_R$ satisfying
$g \c (\xi_r \hrho_r) = \xi_{g \c r} \hrho_{g \c r}$,
which is equivalent to $g \c \hrho_r = \hrho_{g \c r}$ and $\xi_r = \xi_{g \c r}$,
for any $g \in \mG$ and $r \in \mI_R$.
Let the action of $\mG$ on $\mI_M = \mI_{R+1}$,
$g \c m$ $~(m \in \mI_M)$, be $g \c m = \pi^{(R)}_g(m)$ for any $m \in \mI_R$ and $g \c R = R$,
where $\{ \pi^{(R)}_g : g \in \mG \}$ is the action of $\mG$ on $\mI_R$.
Also, let the action of $\mG$ on $\mI_J$ be the same as the action of $\mG$ on $\mI_M$.
Then, since Eqs.~(\ref{eq:sym_constraint}) and (\ref{eq:sym_eval}) hold,
there exists an optimal measurement satisfying Eq.~(\ref{eq:sym_Pi}).

\subsection{Minimax solution for plural state sets} \label{subsec:example_plural}

In this example, we can apply Theorem~\ref{thm:minimax},
that is, $(\mu^\star, \Pi^\star)$ is a minimax solution if and only if
Eq.~(\ref{eq:thm_minimax2}) holds, or $F^\star(\mu^\star) = F(\mu^\star, \Pi^\star)$ and
Eq.~(\ref{eq:thm_minimax3}) hold.
Substituting Eq.~(\ref{eq:Pc_minimax_primal}) into Eq.~(\ref{eq:thm_minimax3}) gives
\begin{eqnarray}
 && \sum_{m=0}^{R-1} \Tr(\hrho'_{k,m} \hPi^\star_m) \ge \sum_{m=0}^{R-1} \Tr(\hrho'_{k',m} \hPi^\star_m), \nonumber \\
 && ~~~ \forall k,k' \in \mI_K ~{\rm s.t.}~\mu^\star_{k'} > 0.
  \label{eq:minimax_plural}
\end{eqnarray}

From Eq.~(\ref{eq:Pc_minimax_KJ}), $F(\mu,\Pi)$ is expressed by
\begin{eqnarray}
 F(\mu,\Pi) &=& \sum_{k=0}^{K-1} \mu_k \sum_{m=0}^{R-1} \Tr(\hrho'_{k,m} \hPi_m) \nonumber \\
 &=& \sum_{m=0}^{R-1} \Tr\left[ \left( \sum_{k=0}^{K-1} \mu_k \hrho'_{k,m} \right) \hPi_m \right].
\end{eqnarray}
Thus, $F^\star(\mu)$ is equivalent to the optimal value of $f(\Pi)$
of optimization problem (\ref{eq:primal}) with
\begin{eqnarray}
 M &=& R, \nonumber \\
 J &=& 0, \nonumber \\
 \hc_m &=& \sum_{k=0}^{K-1} \mu_k \hrho'_{k,m}. \label{eq:minimax_plural_Fstar_param}
\end{eqnarray}
This indicates that $F^\star(\mu)$ is also equivalent to the average success probability
of a minimum error measurement for the state set $\{ \hc_m / \Tr~\hc_m : m \in \mI_R \}$
with prior probabilities $\{ \Tr~\hc_m : m \in \mI_R\}$
(note that $\hc_m \in \mS_+$ holds from Eq.~(\ref{eq:minimax_plural_Fstar_param})).

We can also apply Theorem~\ref{thm:sym_minimax} in the case in which
given state sets have a certain symmetry.
Assume that there exist actions of $\mG$ on $\mS$, $\mI_R$, and $\mI_K$
satisfying $g \c \hrho'_{k,m} = \hrho'_{g \c k,g \c m}$
for any $g \in \mG$, $m \in \mI_R$, and $k \in \mI_K$.
For example, this assumption holds if each state set $\Psi_k = \{ \hrho_{k,m} : m \in \mI_R \}$
is $\mG$-covariant under the same actions of $\mG$ on $\mS$ and $\mI_R$,
i.e., $g \c \hrho'_{k,m} = \hrho'_{k,g \c m}$ for any $g \in \mG$ and $m \in \mI_R$
(in this case let $g \c k = k$ for any $k \in \mI_K$).
Under this assumption, Eq.~(\ref{eq:sym_minimax_cd}) holds,
and thus a $\mG$-covariant minimax solution exists.

\section{Conclusion}

We investigated a generalized optimization problem of finding quantum measurements.
Each of the objective and constraint functions in this problem is formulated by
the sum of the traces of the multiplication of a Hermitian operator and a detection operator.
We first derived corresponding dual problems and necessary and sufficient conditions
for an optimal measurement.
The minimax version of this problem was also studied,
and necessary and sufficient conditions for a minimax solution were provided.
We finally showed that for an optimization problem having a certain symmetry
with respect to a group in which each element corresponds to a unitary or anti-unitary operator,
there exists an optimal solution with the same symmetry.

\begin{acknowledgments}
 We are grateful to O. Hirota of Tamagawa University for support.
 T. S. U. was supported (in part) by JSPS KAKENHI (Grant No.24360151).
\end{acknowledgments}

\appendix

\section{Derivation of dual problem without Born rule} \label{append:nosig}

Here, we show that the optimal value of problem (\ref{eq:dual}) is an upper bound of
the objective function $f(\Pi)$ for a generalized optimal measurement $\Pi$
without using the Born rule (i.e., the fact that
the probability $P(m|\hrho)$ of the outcome $m \in \mI_M$ for input state $\hrho$
is $\Tr(\hrho\hPi_m)$).
We pose the following two requirements:
\begin{enumerate}[(1)]
 \setlength{\parskip}{0cm}
 \setlength{\itemsep}{0cm}
 \item Any quantum state is given by a density operator,
	   which is positive semidefinite with unit trace.
 \item The probability $P(m|\hrho)$ is affine in $\hrho$,
	   that is, for any two states $\hrho$ and $\hrho'$ and any $t \in \Real$ with $0 \le t \le 1$,
	   we have
	   \begin{eqnarray}
		P[m|t\hrho + (1-t)\hrho'] &=& t P(m|\hrho) + (1-t)P(m|\hrho'). \label{eq:nosig_Prob}
	   \end{eqnarray}
\end{enumerate}
Since $t\hrho + (1-t)\hrho'$ can be interpreted as a statistical mixture of the states
$\hrho$ and $\hrho'$ with probabilities $t$ and $1-t$,
the second requirement seems to be natural,
which is also pointed out in Ref.~\cite{Kim-Miy-Ima-2009}.
Since $P(m|\hrho)$ is a probability, it must satisfy
$P(m|\hrho) \ge 0$ for any $m \in \mI_M$ and $\sum_{m=0}^{M-1} P(m|\hrho) = 1$.

To simplify the notation,
we extend $P(m|\hrho)$ to a linear mapping, which we denote as $p(m|\hrho)$, as follows.
$p(m|\hA)$ $~(m \in \mI_M, \hA \in \mS)$ is defined such that
$p(m|\hrho) = P(m|\hrho)$ holds for any density operator $\hrho$
and it satisfies, for any $t, t' \in \Real$ and $\hA, \hA' \in \mS$,
\begin{eqnarray}
 p(m|t \hA + t' \hA') &=& t p(m|\hA) + t' p(m|\hA'). \label{eq:nosig_P}
\end{eqnarray}
This equation means that $p(m|\hA)$ is linear in $\hA$.
This definition uniquely determines $p(m|\cdot)$ for a given $P(m|\cdot)$.
Since any $\hA \in \mS_+$ can be expressed by a form of $\hA = t \hrho$ with $t = \Tr~\hA \ge 0$
and a density operator $\hrho = \hA / \Tr~\hA$, we obtain
\begin{eqnarray}
 p(m|\hA) \ge 0, ~~ \forall \hA \in \mS_+. \label{eq:nosig_p_ge_0}
\end{eqnarray}
Moreover, for any $\hA \in \mS$, let a Schmidt decomposition of $\hA$
be $\hA = \sum_n \lambda_n \hP_n$
($\hP_n$ can be regarded as a density operator);
then, from the linearity of $p(m|\cdot)$ and $\sum_{m=0}^{M-1} P(m|\hP_n) = 1$,
we obtain
\begin{eqnarray}
 \sum_{m=0}^{M-1} p(m|\hA) &=& \sum_n \lambda_n \sum_{m=0}^{M-1} P(m|\hP_n) = \Tr~\hA. \label{eq:nosig_sum1}
\end{eqnarray}

It should be noted that we can derive from the above two requirements (and Eq.~(\ref{eq:nosig_P})) that,
for any quantum measurement, there exists a POVM $\Pi = \{ \hPi_m : m \in \mI_M \}$
satisfying $P(m|\hrho) = \Tr(\hrho\hPi_m)$, i.e., the Born rule holds;
however, we do not use this fact in this section.

To avoid using the Born rule, we consider the following problem instead of problem (\ref{eq:primal}):
\begin{eqnarray}
 \begin{array}{ll}
  {\rm maximize} & \displaystyle f_p(\Pi) = \sum_{m=0}^{M-1} p(m|\hc_m) \\
  {\rm subject~to} & \Pi \in \POVM^\bullet, \\
 \end{array} \label{eq:nosig_primal}
\end{eqnarray}
where $\hc_m \in \mS$ holds for any $m \in \mI_M$.
$\Pi$ is a quantum measurement, which is expressed as a collection of mappings $p(m|\cdot)$,
i.e., $\{ p(m|\cdot) : m \in \mI_M \}$.
$\POVM^\bullet$ is defined by
\begin{eqnarray}
 \hspace{-1em}
  \POVM^\bullet &=& \left\{ \Pi : \sum_{m=0}^{M-1} p(m|\ha_{j,m}) \le b_j, ~ \forall j \in \mI_J \right\},
  \label{eq:POVMbullet}
\end{eqnarray}
where $\ha_{j,m} \in \mS$ and $b_j \in \Real$ hold for any $m \in \mI_M$ and $j \in \mI_J$.
$J$ is a nonnegative integer.

Let $\lambda \in \Real_+^J$.
Also, choose $\hX$ such that $\hX \ge \hz_m(\lambda)$ holds for any $m \in \mI_M$,
where $\hz_m(\lambda)$ is defined by Eq.~(\ref{eq:zm}).
From Eq.~(\ref{eq:nosig_p_ge_0}), for any $m \in \mI_M$, we have
\begin{eqnarray}
 p(m|\hX) - p[m|\hz_m(\lambda)] &=& p[m|\hX - \hz_m(\lambda)] \ge 0.
\end{eqnarray}
Thus, from Eq.~(\ref{eq:nosig_sum1}), we have
\begin{eqnarray}
 \sum_{m=0}^{M-1} p[m|\hz_m(\lambda)] &\le& \sum_{m=0}^{M-1} p(m|\hX) = \Tr~\hX. \label{eq:nosig_ineq}
\end{eqnarray}
Therefore, we obtain for any $\Pi \in \POVM^\bullet$,
\begin{eqnarray}
 f_p(\Pi)
 &\le& \sum_{m=0}^{M-1} p(m|\hc_m) + \sum_{j=0}^{J-1} \lambda_j
 \left[ b_j - \sum_{m=0}^{M-1} p(m|\ha_{j,m}) \right] \nonumber \\
 &=& \sum_{m=0}^{M-1} p[m|\hz_m(\lambda)] + \sum_{j=0}^{J-1} \lambda_j b_j
 \le \Tr~\hX + \sum_{j=0}^{J-1} \lambda_j b_j. \nonumber \\
 \label{eq:nosig_proof}
\end{eqnarray}
where the equality in the second line follows from Eq.~(\ref{eq:zm}).
Equation~(\ref{eq:nosig_proof}) means that
the optimal value of problem (\ref{eq:dual}) provides
an upper bound of the optimal value of problem (\ref{eq:nosig_primal}).

It is worth mentioning that the discussion given above has a strong relationship with
the approach described in Refs.~\cite{Bae-Hwa-Han-2011,Bae-2013},
in which it is pointed out that
the average success probability of a minimum error measurement is upper bounded
by ensemble steering and the no-signaling principle,
and its upper bound equals to the average success probability.
In preparation, we introduce ensemble steering.
Assume that two parties, Alice and Bob, share an entangled state and the reduced state
on Bob's side is $\hrho$ that can represent
\begin{eqnarray}
 \hrho = \sum_{n=0}^{N-1} q_n \hrho_n \label{eq:nosig_steering}
\end{eqnarray}
with $N \ge 2$,
where $\hrho_n$ is a density operator and $q_n \ge 0$ satisfies $\sum_{n=0}^{N-1} q_n = 1$.
Then, there exists an Alice's measurement with $N$ outcomes that
prepares Bob's state $\hrho_n$ with probability $q_n$.
This is known as ensemble steering,
which was first noted by Schr{\"o}dinger \cite{Sch-1935,Sch-1936}
and also formalized as the Gisin-Hughston-Jozsa-Wootters theorem \cite{Gis-1989,Hug-Joz-Woo-1993}.
The probability that Bob obtains the result $m$ given that his state is $\hrho$ can be expressed
\begin{eqnarray}
 P(m|\hrho) &=& \sum_{n=0}^{N-1} q_n P(m|\hrho_n), \label{eq:nosig_steering_prob}
\end{eqnarray}
where the right-hand side denotes the weighted average of the conditional probabilities
that Bob obtains the result $m$ knowing that his state are $\hrho_0, \cdots, \hrho_{N-1}$
with weights $q_0, \cdots, q_{N-1}$.
Equations~(\ref{eq:nosig_steering}) and (\ref{eq:nosig_steering_prob}) mean that
$P(m|\hrho)$ is affine in $\hrho$.
In other words, the property that $P(m|\hrho)$ is affine in $\hrho$ can be derived from ensemble steering;
thus, one can apply the discussion described in this section.
Note that the approach described in Refs.~\cite{Bae-Hwa-Han-2011,Bae-2013} also provides
an operational interpretation for the average success probability of a minimum error measurement.

%

\end{document}